\documentclass{article}

\usepackage[preprint]{neurips_2026}

\usepackage[utf8]{inputenc} 
\usepackage[T1]{fontenc}    
\usepackage{hyperref}       
\usepackage{url}            
\usepackage{booktabs}       

\usepackage{amsfonts}       
\usepackage{algorithm}     
\usepackage[noend]{algpseudocode}
\usepackage{algorithm}
\usepackage{amssymb}
\usepackage{nicefrac}       
\usepackage{microtype}      
\usepackage{xcolor}         
\usepackage{amsmath} 
\usepackage{graphicx}
\usepackage{amsthm}
\usepackage{bm}
\usepackage{bbm}
\usepackage{multirow}
\usepackage{subcaption}

\newtheorem{lemma}{Lemma}
\newtheorem{proposition}{Proposition}

\title{Kinetic-Optimal Scheduling with Moment Correction \\ for Metric-Induced Discrete Flow Matching \\ in Zero-Shot Text-to-Speech}

\author{%
  Dong Yang\textsuperscript{1*}, 
  Yiyi Cai\textsuperscript{2}, 
  Haoyu Zhang\textsuperscript{1}, 
  Yuki Saito\textsuperscript{1}, 
  Hiroshi Saruwatari\textsuperscript{1} \\
  \textsuperscript{1}The University of Tokyo, \textsuperscript{2}Independent Researcher \\
  \textsuperscript{*}\texttt{ydqmkkx@gmail.com} \\
}

\begin{document}

\maketitle

\begin{abstract}
Metric-induced discrete flow matching (MI-DFM) exploits token-latent
geometry for discrete generation, but its practical use is limited by two
issues: heuristic schedulers requiring hyperparameter search, and
finite-step path-tracking error from its first-order continuous-time Markov chain (CTMC) solver. We address both issues. 
First, we derive a kinetic-optimal scheduler for prescribed scalar-parameterized probability paths, and instantiate it for MI-DFM as a training-free numerical schedule that traverses the path at constant Fisher–Rao speed. 
Second, we introduce a finite-step moment correction that adjusts the jump probability while preserving the CTMC jump destination distribution. 
We validate the resulting method, GibbsTTS, on codec-based zero-shot text-to-speech (TTS). 
Under controlled comparisons with a unified architecture and large-scale dataset, GibbsTTS achieves the best objective naturalness and is preferred in subjective evaluations over masked discrete generative baselines. 
Additionally, in comparison with the evaluated state-of-the-art TTS systems, GibbsTTS shows strong speaker similarity, achieving the highest similarity on three of four test sets and ranking second on the fourth.

Project page: https://ydqmkkx.github.io/GibbsTTSProject
\end{abstract}

\section{Introduction}
\label{sec: introduction}

Metric-induced discrete flow matching (MI-DFM)~\cite{mi-dfm} provides a principled way to use the geometry of discrete tokens in generative modeling. 
Unlike mask-source discrete generative models that corrupt data by replacing tokens with a special mask state, MI-DFM starts from a uniform token distribution and defines intermediate distributions as Gibbs distributions over token-latent distances, which gradually concentrate on the target token.
Although it has been applied to several tasks~\cite{fudoki, ursa-metric-video, wam-flow, next-omni}, its broader application remains limited by two issues. First, its time scheduler is typically heuristic and tuned through hyperparameter search. This scheduler controls how fast the Gibbs distribution concentrates toward the target token, and therefore has a strong impact on sampling quality. Second, MI-DFM inference relies on a continuous-time Markov chain (CTMC) solver derived from infinitesimal dynamics. In practice, it is applied with a finite number of inference steps, where the standard first-order solver can incur significant path-tracking error.

This work first addresses the scheduling problem from a Fisher--Rao geometric perspective. Prior work~\cite{mi-dfm} derives kinetic-optimal velocities for arbitrary discrete probability paths, and mask-source mixture paths~\cite{dfm, generalized-masked-diffusion} admit closed-form kinetic-optimal schedules. However, such closed-form schedules do not generally exist for arbitrary discrete probability paths. We therefore study kinetic-optimal time scheduling for prescribed scalar-parameterized paths \(p(x;\kappa)\). We show that the kinetic-optimal scheduler traverses the path at constant Fisher--Rao speed. For MI-DFM, the Fisher information depends on the full token-distance distribution, so we construct the scheduler numerically from distance matrices computed from token embeddings. This gives a training-free scheduler and avoids heuristic scheduler families or downstream hyperparameter search.

We further introduce a finite-step moment correction for CTMC sampling. The correction preserves the CTMC jump destination distribution and adjusts only the jump probability, so that the one-step update better matches a reference moment at the next time step. For MI-DFM, we use a local Fisher--Rao tangent statistic, yielding a lightweight correction that improves finite-step path tracking, with consistent empirical gains shown in Section~\ref{sec: results and analysis}.

We validate the resulting method on codec-based~\cite{encodec} zero-shot text-to-speech (TTS), where discrete flow matching remains less explored. Codec-based TTS is a natural application for MI-DFM, since acoustic tokens are associated with learned codebook embeddings and nearby tokens often correspond to acoustically similar representations. To evaluate the algorithmic effect, we build a DiT-based~\cite{dit} codec-token TTS model and conduct controlled comparisons under a unified architecture and large-scale dataset. Experiments on English and Chinese zero-shot TTS show that our method achieves the best objective naturalness against masked discrete generative baselines, and is further preferred in subjective evaluations.
Our contributions are summarized as follows:
\begin{itemize}
    \item 
    We formulate kinetic-optimal time scheduling for prescribed scalar-parameterized discrete probability paths, complementing prior work~\cite{mi-dfm} on kinetic-optimal velocities.
    We instantiate this framework for MI-DFM and derive a numerical scheduler from token-latent distance matrices. This avoids heuristic scheduler families and downstream hyperparameter search, which is a limitation identified in~\cite{mi-dfm}.

    \item 
    We introduce a finite-step moment correction for CTMC sampling that reduces path-tracking error by adjusting the jump probability while preserving the CTMC jump destination distribution.

    \item 
    We build a DiT-based codec-token TTS model and conduct controlled comparisons under a unified architecture and large-scale dataset, providing a fair evaluation of MI-DFM against masked discrete generative baselines. To the best of our knowledge, this is the first study of MI-DFM for TTS.

    \item 
    Experiments validate the proposed methods. Comparisons against state-of-the-art zero-shot TTS systems further indicate the potential of MI-DFM for TTS and support the proposed model as a credible testbed for evaluating discrete generative algorithms.
\end{itemize}

For clarity in the following sections and to facilitate future comparisons, we refer to the full model equipped with the kinetic-optimal scheduler and finite-step moment correction as \textit{GibbsTTS}, reflecting the Gibbs-distribution form of the metric-induced path. 
\section{Preliminaries}
\label{sec: preliminaries}

\subsection{Discrete flow matching}
\label{sec: dfm}

\textbf{Probability paths.}
We denote a discrete token sequence $x = (x^1, x^2, ..., x^N) \in \mathcal{S} = [s]^N$, $[s] = \{1, ..., s\}$, where $N$ is the length of the sequence, $s$ is the vocabulary size of the tokens, and $\mathcal{S}$ is the set of possible sequences. Let $p(x)$ and $q(x)$ denote the source and target \textit{probability mass functions} (PMFs) over the space $\mathcal{S}$, the goal of discrete flow matching is to learn the transformation from $p(x)$ to $q(x)$. We consider time-dependent \textit{probability paths} $p_t(x)$, $t \in [0, 1]$ with the form
\begin{equation}
    p_t(x) = \sum_{x_1 \in \mathcal{S}} p_t(x | x_1) q(x_1), 
    \quad
    p_t(x | x_1) = \prod_{i=1}^N p_t(x^i | x_1^i)
\end{equation}
where $p_t(x^i | x_1^i)$ is a conditional probability path, and the boundary conditions are $p_0(x^i | x_1^i) = p_0(x^i)$ and $p_1(x^i|x_1^i) = \delta_{x_1^i}(x^i)$. 
Here, $\delta_y$ denotes the delta function on $\mathcal{S}$, defined by
\begin{equation}
    \delta_y(x) = \textstyle\prod_{i=1}^N \delta_{y^i}(x^i), \
    \text{where} \
    \delta_{y^i}(x^i) = 
    \begin{cases}
        1 & x^i = y^i \\
        0 & x^i \neq y^i
    \end{cases}.
\end{equation}
The mixture paths~\cite{dfm} with $x_1^i$-dependent schedulers introduced in~\cite{generalized-masked-diffusion} are defined as
\begin{equation}
    p_t(x^i | x_1^i) = (1 - \kappa_t(x_1^i)) p(x^i) + \kappa_t(x_1^i) \delta_{x_1^i}(x^i), 
    \quad
    \kappa_0(\cdot) = 0
    \text{ and }
    \kappa_1(\cdot) = 1.
    \label{eq: mixture pt}
\end{equation}
\textbf{Probability velocities.} 
We consider a continuous-time Markov chain (CTMC) $X_t \sim p_t$, $t \in [0,1)$ in $\mathcal{S}$, and each token in $X_t$ is updated independently with the time step $h > 0$:
\begin{equation}
\mathbb{P}(X_{t+h} = x | X_t = z) 
    = \delta_{z}(x) + h u_t(x, z) + o(h),
\
u_t(x, z) 
    = \textstyle\sum_{i=1}^N u_t^i(x^i, z) \textstyle\prod_{j \neq i}\delta_{z^j}(x^j),
\end{equation}
where $o(h)$ denotes a higher-order infinitesimal and is omitted in the following derivation. $u_t$ is called the \textit{probability velocity}. When $x^i \neq z^i$, $u_t^i(x^i, z) \geq 0$ and can be expressed as a marginal form:
\begin{equation}
u_t^i(x^i, z) 
    = \tfrac{\dot{\kappa}_t}{1 - \kappa_t} [p_{1 | t}^i(x^i | z) - \delta_{z^i}(x^i)],
\quad
p_{1 | t}^i(x^i | z) 
    = \textstyle\sum_{x_1 \in \mathcal{S}}\delta_{x_1^i}(x^i) \tfrac{p_t(z | x_1) q(x_1)}{p_t(z)}, \label{eq: p_1_t}
\end{equation}
where $p_{1 | t}(x^i | z)$ is the posterior probability.
That is, $u_t^i(x^i, z) = \mathbb{E}_{x_1 \sim p_{1|t}(\cdot | z)} [u_t^i(x^i, z^i | x_1^i)]$.

According to the rate condition, when $x^i = z^i$,
\begin{equation}
    u_t^i(z^i,z^i|x_1^i) = -\sum_{x^i \neq z^i} u_t^i(x^i,z^i|x_1^i).
    \label{eq: rate condition}
\end{equation}
\textbf{Training.} 
A simple training objective of DFM is
\begin{equation}
    \mathcal{L}(\theta)
    = \mathbb{E}_{t \sim \mathcal{U}[0,1], x_1 \sim q(\cdot), x \sim p_t(\cdot | x_1)}
    \textstyle\sum_{i=1}^N [-\log p_{1 | t}^{\theta,i} (x_1^i | x)].
\end{equation}
The learnable parameters $\theta$ are trained to predict the target states from the states sampled along the probability paths.

\textbf{Inference.} At inference, \cite{mi-dfm} proposes the following sampling process that avoids the summation computation in Eq.~\ref{eq: p_1_t}:
\begin{equation}
    \quad
    \hat x_1^i \sim p_{1 | t}^{\theta,i}(\cdot | x_t), 
    \quad
    \hat x_{t+h}^i \sim \delta_{x_t^i}(\cdot) + h u_t^i(\cdot, x_t^i | \hat x_1^i)
\end{equation}

\subsection{Kinetic-optimal velocities and probability paths}
\label{sec: KO u&k}

\textbf{Notation.} 
From here on, we use a simplified notation following~\cite{mi-dfm}, where the state space is now $[s]$ and $x, z \in [s]$. The position superscript $i$ is omitted for clarity.

\textbf{Kinetic-optimal $u_t^*$.}
\cite{mi-dfm} proposes a general formulation to construct a kinetic-optimal velocity $u_t^*(x)$ for an arbitrary discrete probability path $p_t(x)$, from the perspective of minimizing the symmetric kinetic energy:
\begin{equation}
    u_t^*(x,z)
    =
    \begin{cases}
    \dfrac{
    \left[
    p_t(z)\dot p_t(x)-\dot p_t(z)p_t(x)
    \right]_+
    }{
    p_t(z)
    },
    & \text{if } x\neq z \text{ and } p_t(z)>0,\\[1.0em]
    0,
    & \text{if } x\neq z \text{ and } p_t(z)=0.
    \end{cases}
    \label{eq: ko velocity}
\end{equation}
The diagonal where $x=z$ is then determined by the rate condition in
Eq.~\ref{eq: rate condition}.

\textbf{Kinetic optimal $p_t^*$.} 
\cite{mi-dfm} formulates the problem of finding the kinetic-optimal
probability path by introducing $a_t(x)=\sqrt{p_t(x)}$:
\begin{equation}
    \min_{a} \int_{0}^{1}\sum_{x}\dot{a}_{t}(x)^{2}dt,
    \quad
    \text{s.t.} \sum_{x}a_{t}(x)^{2}=1, 
    \text{ }
    \forall t\in[0,1],
    \text{ }
    a_{0}(x)=\sqrt{p(x)}, 
    \text{ }
    a_{1}(x)=\sqrt{q(x)},
\label{eq: minimize ko}
\end{equation}
The solution is the geodesic on the probability hypersphere, and the
optimal probability path is $p_t^*(x)=a_t(x)^2$.
When $q(x)=\delta_{x_1}(x)$, this path takes the mixture form in
Eq.~\ref{eq: mixture pt}, where
\begin{equation}
    \kappa_{t}(x_{1})=1-\frac{\sin^{2}((1-t)\Omega(x_{1}))}{\sin^{2}\Omega(x_{1})}, 
    \quad \text{where } \Omega(x_{1})=\arccos\sqrt{p(x_{1})}.
\label{eq: mixture ko}
\end{equation}
When it is used with a mask source distribution that $p(x) = \delta_\text{mask}(x)$, then $p(x_1) = 0$ and $\Omega(x_1) = \frac{\pi}{2}$, thus $\kappa_{t}(x_{1}) = \sin^2(\frac{\pi}{2}t)$.


\subsection{Metric-induced discrete flow matching}
\label{sec: mi-dfm}

\cite{mi-dfm} further introduces a metric-induced $p_t(x)$ in Eq~\ref{eq: metric-induced pt}, which we call metric-induced DFM (MI-DFM). 
It exploits the geometric structure of the tokenizer latent space with the form:
\begin{align}
    p_t(x | x_1) &= \mathrm{softmax}(-\beta_t d(x, x_1)),
    \label{eq: metric-induced pt}
\end{align}
where $\beta: [0,1] \to \mathbb{R}_{\geq 0}$ is a monotonic scheduler with $\beta_0 = 0$, $\beta_1 = \infty$, and $d: [s] \times [s] \to \mathbb{R}_{\geq 0}$ such that $d(x, x_1) = 0 \Leftrightarrow x = x_1$.
This $p_t$ takes the form of a Gibbs distribution,
where the energy of each token is determined by its distance to the target token $x_1$ in the tokenizer latent space.
At $t=0$, all tokens have equal sampling probability, corresponding to random token initialization,
whereas as $t \to 1$, the distribution collapses to $\delta_{x_1}(x)$.
Combining Eq.~\ref{eq: ko velocity} with Eq.~\ref{eq: metric-induced pt},
the $u_t^*$ is given by
\begin{equation}
    u_t^*(x, z | x_1) = p_t(x | x_1)\dot{\beta}_t [d(z, x_1) - d(x, x_1)]_+.
    \label{eq: optimal ut}
\end{equation}
The distance function $d(\cdot,\cdot)$ can be instantiated using any valid metric.
A direct choice is the $\ell_p$-norm distance between the latent embeddings $\mathbf{e}(\cdot)$ associated with tokens, where $\beta_t$ is in a heuristic form, and $\text{lp}, c, a$ are hyperparameters:
\begin{equation}
    d(x, x_1) = | \mathbf{e}(x) - \mathbf{e}(x_1)|_2^{\text{lp}},
    \quad 
    {\beta}_t = c (\frac{t}{1 - t})^a.
\label{eq: distance and beta}
\end{equation}



\section{Kinetic-optimal time scheduling}
\label{sec: method}

In this section, we study the kinetic-optimal time scheduling problem for a prescribed probability path.
We consider a fixed one-dimensional family of categorical distributions
$p(x;\kappa)$, where $\kappa \in [0,\kappa_{\max}]$ is a scalar path parameter
that specifies a geometric curve on the probability simplex.
A monotonic time scheduler $\kappa_t$ induces the time-dependent probability path
$
    p_t(x) = p(x;\kappa_t),
\text{ }
    t \in [0,1].
$
Here, we do not optimize over all possible probability paths. Instead, the geometric curve
$p(x;\kappa)$ is fixed, and our goal is to find the kinetic optimal
$\kappa_t^*$ that minimizes the Fisher--Rao path energy:
$
    \kappa_t^*
    =
    \arg\min_{\kappa_t \in \mathcal{K}}
    \mathcal{E}_{FR}[\kappa_t],
    \text{ }
    \mathcal{K}
    =
    \left\{
    \kappa_t:
    \kappa_0=0,\ 
    \kappa_1=\kappa_{\max},\
    \dot{\kappa}_t \geq 0
    \right\}.
$
Under this formulation, changing $\kappa_t$ only reparameterizes the same geometric path.
Therefore, the Fisher-Rao length is invariant to the scheduler, while the Fisher-Rao energy depends on the speed along the path. The kinetic-optimal scheduler is obtained by enforcing a constant Fisher--Rao speed.

To measure the cost of the probability path in the discrete state space, we utilize the path energy induced by the Fisher-Rao metric~\cite{fisher-amari}. For a categorical distribution, the squared Riemannian velocity of the probability path $p_t$ is given by $\|\dot{p}_t\|^2_{g_{FR}}=\sum_x\frac{\dot{p}_t(x)^2}{p_t(x)}$. Therefore, the Fisher-Rao path energy over $t \in [0,1]$ can be explicitly formulated as
$
    \mathcal{E}_{FR}[p]
    = \int_0^1 \|\dot{p}_t\|^2_{g_{FR}}dt
    = \int_0^1 \sum_x\frac{\dot{p}_t(x)^2}{p_t(x)}dt
$

This energy is consistent with the kinetic energy defined in Eq.~\ref{eq: minimize ko}. Specifically, with
$a_t(x)=\sqrt{p_t(x)}$, we have
$
    \dot{a}_t(x)
    =
    \frac{\dot{p}_t(x)}{2\sqrt{p_t(x)}},
    \text{}
    \mathcal{E}_{FR}[p]
    =
    4\int_0^1 \sum_x \dot{a}_t(x)^2 dt.
$
Therefore, the kinetic energy minimization problem defined in Eq.~\ref{eq: minimize ko} is equivalent to minimizing the Fisher-Rao path energy.
In our setting, the geometric path $p(x;\kappa)$ is prescribed, and we only optimize its time reparameterization $\kappa_t$.

\subsection{Kinetic-optimal scheduling via Fisher-Rao geometry}

We derive the kinetic-optimal scheduler for a prescribed geometric path
$p(x;\kappa)$. 
The following lemmas and proposition show that the solution is obtained by reparameterizing the path with constant Fisher--Rao speed, which
leads to a constructive scheduler determined by the Fisher information.

The proofs of Lemma~\ref{lemma: length-invariance}, Lemma~\ref{lemma: constant-speed}, and Proposition~\ref{proposition: ko scheduler} are provided in Appendix~\ref{appendix: proof}.

\begin{lemma}[Scheduler-invariant Fisher-Rao length]
\label{lemma: length-invariance}
Let $p_t(x)=p(x;\kappa_t)$, where $p(x;\kappa)$ is a differentiable probability path with $\kappa\in[0,\kappa_{\max}]$, and $\kappa_t$ is a monotonic scheduler with $\kappa_0=0$ and $\kappa_1=\kappa_{\max}$. 
Then the Fisher-Rao length is invariant to the choice of $\kappa_t$:
\begin{equation}
    \ell[\kappa_t]
    =
    \int_0^1 \|\dot p_t\|_{g_{FR}}dt
    =
    \int_0^{\kappa_{\max}}\sqrt{\mathcal I(\kappa)}d\kappa
    \triangleq L,
\end{equation}
where
$
    \mathcal{I}(\kappa)
    =
    \sum_x
    \frac{
    \left(\partial_\kappa p(x;\kappa)\right)^2
    }{
    p(x;\kappa)
    }
$
is the Fisher information of the path parameter $\kappa$.
\end{lemma}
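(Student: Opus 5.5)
The plan is a direct change of variables in the length integral, using the chain rule for the composed path $p_t(x)=p(x;\kappa_t)$. First, for each $x$ we write $\dot p_t(x)=\partial_\kappa p(x;\kappa)\big|_{\kappa=\kappa_t}\,\dot\kappa_t$. Substituting into the Fisher--Rao speed and pulling the common factor $\dot\kappa_t^2$ out of the sum gives
\begin{equation*}
\|\dot p_t\|_{g_{FR}}^2=\sum_x\frac{\left(\partial_\kappa p(x;\kappa_t)\right)^2\dot\kappa_t^2}{p(x;\kappa_t)}=\mathcal I(\kappa_t)\,\dot\kappa_t^2 .
\end{equation*}
Taking square roots yields $\|\dot p_t\|_{g_{FR}}=\sqrt{\mathcal I(\kappa_t)}\,|\dot\kappa_t|$.

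Second, we use monotonicity. Since $\kappa_t$ is nondecreasing, as in the class $\mathcal K$, we have $|\dot\kappa_t|=\dot\kappa_t$. Then
\begin{equation*}
\ell[\kappa_t]=\int_0^1\sqrt{\mathcal I(\kappa_t)}\,\dot\kappa_t\,dt .
\end{equation*}
The substitution $\kappa=\kappa_t$ with $d\kappa=\dot\kappa_t\,dt$ and endpoints $\kappa_0=0$, $\kappa_1=\kappa_{\max}$ turns this into $\int_0^{\kappa_{\max}}\sqrt{\mathcal I(\kappa)}\,d\kappa$. The right-hand side no longer refers to $\kappa_t$, which proves invariance. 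If $\kappa_t$ were nonincreasing instead, it would have to run from $\kappa_{\max}$ to $0$, which contradicts the boundary conditions, so only the nondecreasing case is relevant.

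The main obstacle is regularity rather than algebra, and there are two points to check.

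\emph{Points where $p(x;\kappa)=0$.} The Fisher information may have terms with $p(x;\kappa)=0$. I would adopt the convention that such terms are dropped; this is consistent, since differentiability and nonnegativity force $\partial_\kappa p=0$ at interior zeros.

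\emph{Endpoint singularities.} The path may be singular at an endpoint, as in MI-DFM where $\beta_1=\infty$ or $\kappa_t$ is only absolutely continuous. To handle this, I would carry out the substitution on $[\epsilon,1-\epsilon]$, where $\kappa_t$ is $C^1$ and monotone; there the substitution rule holds for monotone absolutely continuous maps. I would then let $\epsilon\to0$ using monotone convergence, since the integrand is nonnegative. This shows both sides agree as extended reals, and they are finite whenever $L<\infty$.

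Flat pieces, where $\dot\kappa_t=0$, contribute zero to both sides and need no separate treatment.
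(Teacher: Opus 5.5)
Your proposal is correct and matches the paper's proof: the chain rule gives $\|\dot p_t\|_{g_{FR}}=|\dot\kappa_t|\sqrt{\mathcal I(\kappa_t)}$, monotonicity removes the absolute value, and the substitution $\kappa=\kappa_t$ with $\kappa_0=0$, $\kappa_1=\kappa_{\max}$ yields $L$. Your extra remarks on zero-probability states, endpoint limits via monotone convergence, and flat pieces are refinements the paper does not make, and they do not change the argument.
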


\begin{lemma}[Constant-speed characterization]
\label{lemma: constant-speed}
For a prescribed probability path $p(x;\kappa)$, let
$p_t^*(x)=p(x;\kappa_t^*)$ be the path induced by the kinetic-optimal scheduler $\kappa_t^*$. Then $\kappa_t^*$ minimizes $\mathcal{E}_{FR}[\kappa_t]$ over $\kappa_t\in\mathcal{K}$ if and only if it satisfies
\begin{equation}
    \|\dot p_t^*\|_{g_{FR}} = L,
    \qquad \forall t\in[0,1],
\end{equation}
where $L$ is the scheduler-invariant Fisher-Rao length defined in
Lemma~\ref{lemma: length-invariance}.
\end{lemma}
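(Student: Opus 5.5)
The plan is to combine the chain rule with the Cauchy--Schwarz inequality, with Lemma~\ref{lemma: length-invariance} supplying the fixed length $L$. By the chain rule, $\dot p_t(x)=\partial_\kappa p(x;\kappa_t)\,\dot\kappa_t$. Hence
\[
\|\dot p_t\|_{g_{FR}}^2=\mathcal I(\kappa_t)\,\dot\kappa_t^2 ,
\]
and the speed is $v(t)\triangleq\|\dot p_t\|_{g_{FR}}=\sqrt{\mathcal I(\kappa_t)}\,\dot\kappa_t$, using $\dot\kappa_t\ge 0$ for $\kappa_t\in\mathcal K$. The energy is then $\mathcal E_{FR}[\kappa_t]=\int_0^1 v(t)^2\,dt$, and by Lemma~\ref{lemma: length-invariance} we have $\int_0^1 v(t)\,dt=L$ for every $\kappa_t\in\mathcal K$.

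\textbf{Lower bound.} Apply Cauchy--Schwarz on $[0,1]$:
\[
L^2=\Big(\int_0^1 v(t)\cdot 1\,dt\Big)^2\le\int_0^1 v(t)^2\,dt\cdot\int_0^1 1\,dt=\mathcal E_{FR}[\kappa_t].
\]
So $\mathcal E_{FR}\ge L^2$ on all of $\mathcal K$. Equality holds if and only if $v$ is a.e.\ equal to a constant, and that constant must be $L$ because $\int_0^1 v\,dt=L$.

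\textbf{The bound is attained.} Define $\Phi(\kappa)=\int_0^\kappa\sqrt{\mathcal I(\kappa')}\,d\kappa'$. It is nondecreasing and maps $[0,\kappa_{\max}]$ onto $[0,L]$. Set $\kappa_t^*=\Phi^{-1}(Lt)$, taking a generalized inverse if $\mathcal I$ vanishes on a set. Then $\kappa_t^*\in\mathcal K$, and differentiating $\Phi(\kappa^*_t)=Lt$ gives $\sqrt{\mathcal I(\kappa_t^*)}\,\dot\kappa_t^*=L$. Thus $\min_{\mathcal K}\mathcal E_{FR}=L^2$.

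\textbf{Both directions.} If $\kappa_t^*$ is a minimizer, then $\mathcal E_{FR}[\kappa_t^*]=L^2$, and the equality case of Cauchy--Schwarz forces $v\equiv L$. Conversely, if $v\equiv L$, then $\mathcal E_{FR}=L^2$, which equals the minimum.

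\textbf{Main obstacle.} The hard part is regularity. Cauchy--Schwarz only gives $v=L$ almost everywhere, so upgrading to ``for all $t$'' needs continuity of $\dot\kappa_t$; I would assume $\kappa_t\in C^1$, or read the statement a.e. A second issue is the endpoint $\kappa_{\max}=\infty$, as with $\beta_1=\infty$ in MI-DFM. There I would handle the endpoint by a limiting argument on $[0,1-\epsilon]$, which requires $L<\infty$. I would state these hypotheses explicitly and otherwise rely on Lemma~\ref{lemma: length-invariance} as given.
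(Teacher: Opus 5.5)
Your proposal is correct and follows the paper's proof: Cauchy--Schwarz combined with the scheduler-invariant length $L$ from Lemma~\ref{lemma: length-invariance}, with the equality case forcing constant speed $L$. You are more careful than the paper in two respects: the paper never shows that the bound $L^2$ is attained in $\mathcal{K}$, which the ``only if'' direction needs (it is supplied only later, by the construction in Proposition~\ref{proposition: ko scheduler} under $\mathcal{I}>0$), and it ignores the a.e.\ subtlety. One caveat: your generalized-inverse fallback fails if $\mathcal{I}$ vanishes on an interval, since the inverse then jumps and leaves $\mathcal{K}$; but then, for $L>0$, the infimum $L^2$ is not attained and the lemma holds vacuously.
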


\begin{proposition}[Kinetic-optimal time scheduler]
\label{proposition: ko scheduler}
Let $p_t(x)=p(x;\kappa_t)$ be a prescribed path parameterized by a monotonic scheduler $\kappa_t$, with $\kappa_0=0$ and $\kappa_1=\kappa_{\max}$.
Assume $\mathcal{I}(\kappa)>0$ on $[0,\kappa_{\max}]$. The kinetic-optimal scheduler and its time derivative are given by:
\vspace{-5pt}
\[
    \kappa_t^*
    =
    F^{-1}(t),
    \quad 
    \dot{\kappa}_t^*
    =
    \frac{L}{
    \sqrt{\mathcal{I}(\kappa_t^*)}
    },
    \quad
    \text{where }
    F(\kappa)
    =
    \frac{
    \int_0^\kappa \sqrt{\mathcal{I}(\xi)} d\xi
    }{L}.
\]
\vspace{-10pt}
\end{proposition}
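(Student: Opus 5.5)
The plan is to combine Lemma~\ref{lemma: length-invariance} and Lemma~\ref{lemma: constant-speed} through the chain rule, and then solve the resulting separable ODE for $\kappa_t$. For a scheduler $\kappa_t\in\mathcal{K}$ we have $\dot p_t(x)=\partial_\kappa p(x;\kappa_t)\,\dot\kappa_t$. Hence
\[
\|\dot p_t\|_{g_{FR}}^2=\sum_x \frac{(\partial_\kappa p(x;\kappa_t))^2}{p(x;\kappa_t)}\,\dot\kappa_t^2=\mathcal I(\kappa_t)\,\dot\kappa_t^2 .
\]
Since $\dot\kappa_t\ge 0$, this gives $\|\dot p_t\|_{g_{FR}}=\sqrt{\mathcal I(\kappa_t)}\,\dot\kappa_t$. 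By Lemma~\ref{lemma: constant-speed}, $\kappa_t^*$ is optimal if and only if $\sqrt{\mathcal I(\kappa_t^*)}\,\dot\kappa_t^*=L$ for all $t$. Because $\mathcal I>0$, we can divide to obtain $\dot\kappa_t^*=L/\sqrt{\mathcal I(\kappa_t^*)}$, which is the second claimed formula.

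Next I would integrate this equation. With $F(\kappa)=L^{-1}\int_0^\kappa\sqrt{\mathcal I(\xi)}\,d\xi$, the chain rule gives $\frac{d}{dt}F(\kappa_t^*)=L^{-1}\sqrt{\mathcal I(\kappa_t^*)}\,\dot\kappa_t^*=1$. Together with $F(\kappa_0^*)=F(0)=0$, this yields $F(\kappa_t^*)=t$. The function $F$ is continuous and strictly increasing, since its derivative $\sqrt{\mathcal I}/L$ is positive. Also $F(0)=0$, and $F(\kappa_{\max})=1$ by the definition of $L$ in Lemma~\ref{lemma: length-invariance}. So $F$ is a bijection $[0,\kappa_{\max}]\to[0,1]$, and $\kappa_t^*=F^{-1}(t)$.

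Conversely, I would check that $\kappa_t:=F^{-1}(t)$ lies in $\mathcal K$. The endpoints hold, $\kappa_0=0$ and $\kappa_1=\kappa_{\max}$. The inverse function theorem gives $\dot\kappa_t=1/F'(\kappa_t)=L/\sqrt{\mathcal I(\kappa_t)}>0$. This scheduler therefore has constant Fisher--Rao speed $L$, and by Lemma~\ref{lemma: constant-speed} it is kinetic-optimal. This also shows that the optimal scheduler exists and is unique.

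The main obstacle is regularity. Differentiating $F^{-1}$ and applying the chain rule requires $\mathcal I$ to be continuous, so that $F\in C^1$. The identity $\|\dot p_t\|=\sqrt{\mathcal I}\,\dot\kappa_t$ uses monotonicity to drop the absolute value. I would state continuity of $\mathcal I$ on $[0,\kappa_{\max}]$ as implicit in the assumption that the path is differentiable with $\mathcal I>0$. For an MI-DFM-type path where $\kappa_{\max}$ corresponds to $\beta=\infty$, the endpoint has to be interpreted as a limit with $L<\infty$. I would note this rather than resolve it.
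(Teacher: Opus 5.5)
Your proposal is correct and follows the paper's proof essentially step for step. You use the chain rule to get $\|\dot p_t\|_{g_{FR}}=\dot\kappa_t\sqrt{\mathcal I(\kappa_t)}$, Lemma~\ref{lemma: constant-speed} to force $\dot\kappa_t^*=L/\sqrt{\mathcal I(\kappa_t^*)}$, integration to obtain $F(\kappa_t^*)=t$, and inversion via strict monotonicity of $F$; your converse check that $F^{-1}(t)\in\mathcal K$ attains constant speed $L$ (giving existence and uniqueness), together with your regularity caveats, goes slightly beyond the paper, which only derives the necessary form.
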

The mixture path in Eq.~\ref{eq: mixture pt} already admits a closed-form kinetic-optimal scheduler (Section~\ref{sec: KO u&k};~\cite{mi-dfm}), and 
its mask-source form has also been shown to be Fisher--Rao optimal for masked discrete diffusion models~\cite{scheduler-fisher}. 
Since this case is already well understood, we focus the main text on metric-induced paths and provide the mixture-path derivation in Appendix~\ref{appendix: mixture KO} as a consistency check.

\subsection{Numerical construction for metric-induced paths}
We now instantiate Proposition~\ref{proposition: ko scheduler} for the
metric-induced path defined in Eq.~\ref{eq: metric-induced pt}.
We consider it as
$p(x \mid x_1; \beta) = \mathrm{softmax}(-\beta d(x, x_1))$, where the scheduler given by $\kappa_t = \beta_t$.

For a fixed $x_1$, the Fisher information as defined in Lemma~\ref{lemma: length-invariance} with respect to $\beta$ is
$
    \mathcal I_{x_1}(\beta)
    =
    \mathbb E_{x\sim p(\cdot\mid x_1;\beta)}
    \left[
    \left(
    \partial_\beta
    \log p(x\mid x_1;\beta)
    \right)^2
    \right].
$
Then, we obtain
\begin{equation}
\partial_\beta \log p(x\mid x_1;\beta)
    =
    \mathbb E_{y\sim p(\cdot\mid x_1;\beta)}
    \left[d(y,x_1)\right]
    -
    d(x,x_1),
    \;
    \mathcal I_{x_1}(\beta)
    =
    \mathrm{Var}_{x\sim p(\cdot\mid x_1;\beta)}
    \left[
    d(x,x_1)
    \right].
\label{eq: midfm fisher}
\end{equation}
Thus, for the metric-induced $p_t(x)$ in a Gibbs distribution form, the Fisher information $\mathcal I_{x_1}(\beta)$ is exactly the variance of the distance to the target token.
Although $\mathcal I_{x_1}(\beta)$ admits a simple variance form, it still depends on the full set of distances $\{d(x,x_1)\}_{x\in[s]}$ and the induced distribution $p(x\mid x_1;\beta)$, and therefore does not generally admit a simple analytic expression.

We therefore construct the scheduler numerically.
Since the limiting endpoint corresponds to $\beta\to\infty$, we first choose
a sufficiently large $\beta_{\max}$ to approximate this limit, as detailed in
Algorithm~\ref{algo: find_beta_max}.

We then construct a one-dimensional inverse-temperature grid
$\{\beta_i\}_{i=1}^{I}$, where
$0=\beta_1<\cdots<\beta_I=\beta_{\max}$, and evaluate the Fisher information
on this grid. The resulting values are used to compute the cumulative
Fisher--Rao arc length and invert it on a uniform time grid
$\{t_j\}_{j=1}^{T}$.
This yields the lookup tables
$\{\beta_j^*\}_{j=1}^{T}$ and
$\{\dot{\beta}_j^*\}_{j=1}^{T}$, as detailed in
Algorithm~\ref{algo: KO_schedule}. During training and inference,
$\beta_t^*$ and $\dot{\beta}_t^*$ for arbitrary $t\in[0,1]$ are obtained by
linear interpolation from these tables.
This method does not require model training to search for scheduler hyperparameters.
The remaining choices are numerical construction parameters, where $\beta_{\max}$ is automatically determined by Algorithm~\ref{algo: find_beta_max}, while the others control the numerical resolution.
Increasing the grid resolution improves the approximation accuracy but increases precomputation and storage costs, so they are selected based on an accuracy-cost trade-off rather than downstream experimental search. 

Please see more details in Appendix~\ref{appendix: algo-numer}.

\section{Finite-step moment correction}
\label{sec: method2}

The kinetic-optimal velocity in Eq.~\ref{eq: ko velocity} and its metric-induced form in Eq.~\ref{eq: optimal ut} define infinitesimal CTMC dynamics. In practice, inference uses a finite number of steps, and the standard first-order solver approximates the evolution over \([t,t+h]\) using the instantaneous rate at time \(t\). The resulting one-step transition is exact only in the infinitesimal-step limit, and may incur non-negligible path-tracking error when the number of inference steps is limited.

We introduce a finite-step moment-matched correction to reduce this practical discretization error.
The correction is not intended as a higher-order CTMC integrator or an exact finite-step transition kernel.
Instead, it keeps the normalized instantaneous CTMC jump destination distribution unchanged and adjusts only the probability of accepting a jump, so that a chosen scalar moment better matches a reference value at time \(t+h\).
We first describe the correction as a generic framework parameterized by the choice of moment statistic and reference moment, and then instantiate it for metric-induced DFM. 
As a consistency check, Appendix~\ref{appendix: mixture-consistency} shows that the generic correction reduces to the known exact finite-step update for the standard mixture path when using the target-state indicator moment and the corresponding state-conditional reference moment. 

\subsection{A generic moment-corrected CTMC step}

We use the single-token notation from Section~\ref{sec: preliminaries}.
At inference, given the current token $z$ at time $t$, we first sample a
predicted endpoint token
$\hat{x}_1 \sim p_{1|t}^{\theta}(\cdot \mid z)$
from the learned posterior. For the conditional rate
$u_t(x,z\mid \hat{x}_1)$, we define the jump intensity and the jump destination
distribution as
\begin{equation}
    \lambda_t(z\mid \hat{x}_1)
    =
    \sum_{x\neq z} u_t(x,z\mid \hat{x}_1),
    \qquad
    \pi_t(x\mid z,\hat{x}_1)
    =
    \frac{
        u_t(x,z\mid \hat{x}_1)
    }{
        \lambda_t(z\mid \hat{x}_1)
    },
    \text{ for } x\neq z .
\end{equation}
When $\lambda_t(z\mid \hat{x}_1)=0$, no jump is performed.
The first-order CTMC solver introduced in~\cite{mi-dfm} uses
$
    \rho_{\mathrm{base}}
    =
    1-\exp\!\left(-h\lambda_t(z\mid \hat{x}_1)\right)
$
as the probability of making a jump, and samples the destination from
$\pi_t(\cdot\mid z,\hat{x}_1)$.
This solver is exact only in the infinitesimal-step limit. For a finite
step size $h$, the one-step transition may deviate from the reference
distribution $p_{t+h}(\cdot\mid \hat{x}_1)$. We therefore introduce a lightweight
finite-step correction that keeps the jump destination distribution $\pi_t$
fixed, and only adjusts the jump probability.
Specifically, let $\phi_t(x\mid \hat{x}_1)$ be a scalar statistic that measures
the progress of state $x$ along the path, and let
$m_{t+h}(z,\hat{x}_1)$ be the reference moment that the one-step transition
should match. If a jump probability $\rho\in[0,1]$ is used, the post-step moment
is
\begin{equation}
    (1-\rho)\phi_t(z\mid \hat{x}_1)
    +
    \rho
    \bar{\phi}_t(z,\hat{x}_1),
    \quad
    \text{where }
    \bar{\phi}_t(z,\hat{x}_1)
    =
    \mathbb{E}_{y\sim \pi_t(\cdot\mid z,\hat{x}_1)}
    \left[
        \phi_t(y\mid \hat{x}_1)
    \right].
\label{eq: fisher phi}
\end{equation}
We first obtain the moment-matching jump probability by solving the following
one-dimensional least-squares problem:
\begin{equation}
    \rho^\star
    =
    \arg\min_{\rho\in\mathbb{R}}
    \left[
    (1-\rho)\phi_t(z\mid \hat{x}_1)
    +
    \rho
    \bar{\phi}_t(z,\hat{x}_1)
    -
    m_{t+h}(z,\hat{x}_1)
    \right]^2 .
    \label{eq: rho opt}
\end{equation}
When
$\phi_t(z\mid \hat{x}_1)\neq \bar{\phi}_t(z,\hat{x}_1)$,
the solution to Eq.~\eqref{eq: rho opt} is
\begin{equation}
    \rho^\star
    =
    \frac{
    \phi_t(z\mid \hat{x}_1)-m_{t+h}(z,\hat{x}_1)
    }{
    \phi_t(z\mid \hat{x}_1)-\bar{\phi}_t(z,\hat{x}_1)
    } .
\label{eq: generic corr rho}
\end{equation}
Since $\rho^\star$ serves as a jump probability, it must lie in $[0,1]$.
We use the moment-matching correction only when $0\leq \rho^\star\leq 1$, otherwise fall back to the original first-order CTMC solver:
\begin{equation}
    \rho_{\mathrm{corr}}
    =
    \begin{cases}
    \rho^\star,
    & \text{if } \phi_t(z\mid \hat{x}_1)\neq \bar{\phi}_t(z,\hat{x}_1) \text{ and } 0\leq \rho^\star \leq 1,\\
    \rho_{\mathrm{base}},
    & \text{otherwise}.
    \end{cases}
\label{eq: constraint corr rho}
\end{equation}
When
$\phi_t(z\mid \hat{x}_1)=\bar{\phi}_t(z,\hat{x}_1)$,
the post-step moment is independent of $\rho$. In this degenerate case, we also
use $\rho_{\mathrm{base}}$.
The statistic $\phi_t$ and the reference moment $m_{t+h}$ should be chosen
according to the structure of each probability path.


\subsection{Application to metric-induced paths}
\label{sec: midfm correction}

For the metric-induced path in Eq.~\ref{eq: metric-induced pt}, given a predicted $\hat{x}_1$, we have
$p_t(x\mid \hat{x}_1)=\mathrm{softmax}(-\beta_t d(x,\hat{x}_1))$. 
Since the purpose of the correction is to reduce the finite-step deviation from the reference path, we use the local Fisher-Rao tangent statistic:
\begin{equation}
    \phi_t(x\mid \hat{x}_1)
    =
    \partial_t\log p_t(x\mid \hat{x}_1)
    = 
    \dot{\beta}_t
    \left(
    \mathbb{E}_{y\sim p_t(\cdot\mid \hat{x}_1)}
    [d(y,\hat{x}_1)]
    -
    d(x,\hat{x}_1)
    \right).
\label{eq: fisher tangent}
\end{equation}

The corresponding reference moment is taken under the next-time distribution:
\begin{equation}
    m_{t+h}(z,\hat{x}_1)
    =
    \mathbb{E}_{y\sim p_{t+h}(\cdot\mid \hat{x}_1)}
    \left[
        \phi_t(y\mid \hat{x}_1)
    \right]
    =
    \dot{\beta}_t
    \left(
    \mathbb{E}_{y\sim p_t(\cdot\mid \hat{x}_1)}
    [d(y,\hat{x}_1)]
    -
    \mathbb{E}_{y\sim p_{t+h}(\cdot\mid \hat{x}_1)}
    [d(y,\hat{x}_1)]
    \right).
    \label{eq: fisher moment}
\end{equation}
Since
$
    \mathbb{E}_{y\sim p_t(\cdot\mid \hat{x}_1)}
    \left[
        \partial_t\log p_t(y\mid \hat{x}_1)
    \right]
    =
    0,
$
this reference moment measures the displacement from $p_t$ to $p_{t+h}$ after projection onto the local Fisher--Rao tangent direction. Note that $\phi_t$ is the tangent statistic at time $t$, so evaluating it under $p_{t+h}$ is itself a first-order approximation in $h$. Matching this moment thus aligns the one-step update with the reference path along this tangent direction, while preserving the CTMC jump destination distribution.
Then, according to Eq.~\ref{eq: fisher phi}, we have
\begin{equation}
    \bar{\phi}_t(z,\hat{x}_1)
    =
    \mathbb{E}_{y\sim \pi_t(\cdot\mid z,\hat{x}_1)}
    \left[
        \phi_t(y\mid \hat{x}_1)
    \right]
    = 
    \dot{\beta}_t
    \left(
    \mathbb{E}_{y\sim p_t(\cdot\mid \hat{x}_1)}
    [d(y,\hat{x}_1)]
    -
    \mathbb{E}_{y\sim \pi_t(\cdot\mid z,\hat{x}_1)}
    [d(y,\hat{x}_1)]
    \right).
\label{eq: fisher post moment}
\end{equation}

Substituting Eqs.~\ref{eq: fisher tangent}, \ref{eq: fisher moment}, and \ref{eq: fisher post moment} into Eq.~\ref{eq: generic corr rho} yields
\begin{equation}
    \rho^\star
    =
    \frac{
    d(z,\hat{x}_1)
    -
    \mathbb{E}_{y\sim p_{t+h}(\cdot\mid \hat{x}_1)}
    [d(y,\hat{x}_1)]
    }{
    d(z,\hat{x}_1)
    -
    \mathbb{E}_{y\sim \pi_t(\cdot\mid z,\hat{x}_1)}
    [d(y,\hat{x}_1)]
    }.
\label{eq: midfm corr rho raw}
\end{equation}
The final jump probability is obtained from Eq.~\ref{eq: constraint corr rho}.
Since \(\dot\beta_t\) cancels, the corrected jump probability depends only on expected distances and does not recover the infinitesimal CTMC limit as \(h\to0\). 
Therefore, the correction should be interpreted as a finite-step approximation aimed at reducing discretization error at practical NFEs.

\section{Model}
\label{sec: model}

\begin{figure}
  \centering
  \includegraphics[width=0.9\linewidth, clip]{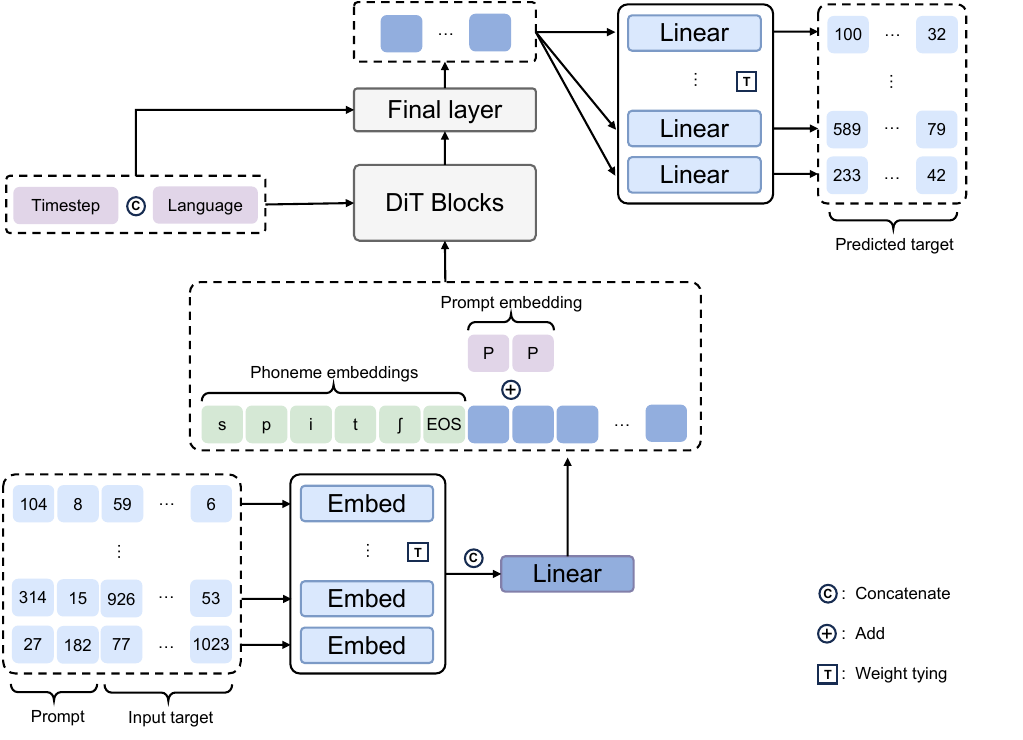}
  \caption{Architecture of the proposed model.}
  \label{fig: architecture}
\end{figure}

For conciseness, this section summarizes the main model design, while detailed training and inference procedures are provided in Appendix~\ref{appendix: algo-model}. Specifically, Algorithms~\ref{algo: train} and~\ref{algo: infer} describe the training and inference processes, respectively.

\textbf{Backbone.}
As shown in Fig.~\ref{fig: architecture}, we adopt a DiT~\cite{dit} backbone and leverage RoPE position embedding~\cite{rope}, SwiGLU~\cite{swiglu}, and RMSNorm~\cite{rmsnorm}. 
The timestep and language embeddings are concatenated and used as conditioning for the adaLN-Zero layers in DiT.

\textbf{Input construction.}
We improve the StableTTS text frontend~\cite{stabletts} for text normalization and grapheme-to-phoneme conversion. Codec-token embeddings from all RVQ codebooks are concatenated along the channel dimension and linearly projected to a per-frame embedding, which is then concatenated along the time dimension with the phoneme embeddings.

\textbf{Prompt construction.} During training, each utterance is randomly split at a ratio sampled from \(U(0,0.3)\), with the prefix used as the speech prompt and the remainder as the prediction target. A learnable prompt embedding is added to the prompt positions.

\textbf{Duration predictor.} 
We adopt a rule-based duration predictor inspired by the released implementation of MaskGCT and further refine it for our model. Details are provided in Appendix~\ref{appendix: duration predictor}.

\textbf{Codebook-wise loss weighting.}
We perform full-codebook training and inference, and introduce codebook-wise loss weights in the training objective, as shown in Lines 12--13 of Algorithm~\ref{algo: train}. We discuss its effect and compare it with the per-layer training strategy of MaskGCT in Appendix~\ref{appendix: training objective}.
\section{Experimental setup}
\label{sec: experimental setup}

\subsection{Model implementations}

\begin{table}[t]
  \caption{Model and training configurations.}
  \label{tab: train configs}
  \centering
  \vspace{2pt}
  \resizebox{1.\textwidth}{!}{
  \begin{tabular}{lcrcccccc}
    \toprule
    \textbf{Model} 
    & \textbf{Size} 
    & \textbf{Dim.} 
    & \textbf{Layers} 
    & \textbf{Dataset} 
    & \textbf{Max. length} 
    & \textbf{Max. batch hours} 
    & \textbf{GPUs} 
    & \textbf{Train time} \\
    \midrule
    Base  
    & 178M 
    & 768   
    & 12 
    & Emilia-en     
    & 1,024 tokens / 20.48 s
    & 2.912 
    & 8  
    & 33 hours \\
    Large 
    & 399M 
    & 1,024 
    & 16 
    & Emilia-en/zh  
    & 1,536 tokens / 30.72 s
    & 4.369 
    & 32 
    & 46 hours \\
    \bottomrule
  \end{tabular}
  }
\end{table}

\textbf{Variants.} As shown in Table~\ref{tab: train configs}, we set both Base and Large variants. The Base variants are used to reduce computational cost and are employed for hyperparameter search and ablation studies.

\textbf{Codec.} We use the pre-trained acoustic codec released by MaskGCT. The encoder follows DAC~\cite{dac} and the decoder follows Vocos~\cite{vocos}. It has 12 RVQ layers, a codebook size of 1024, and an embedding dimension of 8. 
With a 24,000 Hz sampling rate and a hop size of 480, 50 codec frames correspond to 1 second of speech. 

\textbf{Codebook distance matrices.}
Since the codec uses \(\ell_2\)-normalized codebook embeddings, we compute the codebook distance as the squared Euclidean distance between these normalized embeddings.
The distance matrices in Appendices~\ref{appendix: algo-numer} and~\ref{appendix: algo-model} are instantiated as \(\{\mathbf{D}_c\}_{c=1}^{C}\), where
\(\mathbf{D}_c\) contains the pairwise squared Euclidean distances between
the embeddings of the \(c\)-th RVQ codebook.

\textbf{Masked discrete generative baselines.}
As discussed in Section~\ref{sec: dfm}, masked DFM and masked discrete diffusion (DD) share the same training objective.
Therefore, for masked models, we evaluate both the standard DFM~\cite{dfm,diflow} and the DD-style inference procedure used in MaskGCT.
As shown in Table~\ref{tab: objective results}, we also compare different time schedulers for masked models: the closed-form kinetic-optimal (KO) scheduler $\kappa_t=\sin^2(\frac{\pi}{2}t)$ discussed in Section~\ref{sec: KO u&k}, the scheduler $\kappa_t=t^2$~\cite{diflow} used in DiFlow-TTS~\cite{diflow}, and the scheduler $\kappa_t=\sin(\frac{\pi}{2}t)$ used in MaskGCT~\cite{maskgct}.

\subsection{Training and inference}

We train all models for 10 epochs using AdamW~\cite{adamw} without hyperparameter tuning.
The peak learning rate is set to $2\times10^{-4}$, with linear warmup over the first $5\%$ of training steps, followed by cosine decay to $10\%$ of the peak learning rate.
We maintain an exponential moving average (EMA) of the model weights with a decay rate of $0.9999$.
Following MaskGCT, we apply classifier-free guidance (CFG) with a condition drop rate of $0.15$, a CFG scale of $2.5$, and a rescale factor of $0.75$.
During training, the DiT backbone uses BF16 precision, except for the transformation in RMSNorm layers, which is performed in FP32. All other computations are performed in FP32.
All training, inference, and objective evaluations are conducted on NVIDIA H100 GPUs with 96 GB memory.

\textbf{Dynamic batching.}
We adopt a dynamic batching strategy. In Table~\ref{tab: train configs}, the maximum length denotes the maximum sequence length of acoustic tokens.
Given this maximum length, the number of GPUs, and a gradient accumulation factor of $2$ for the Base variants, the maximum amount of audio per batch is $2.912$ hours for the Base variants and $4.369$ hours for the Large variants.

\textbf{Training dataset.} We use the English (en) and Chinese (zh) subsets of Emilia~\cite{emilia} for training. After preprocessing, their total durations are over 46k and 45k hours, respectively.

\textbf{Inference.}
We use 32 function evaluations (NFEs) in the main experiments, and report results with 16 and 64 NFEs in Appendix~\ref{appendix: nfe}.
For sampling, the temperature is selected separately for each model on the validation set, whose pipeline is described in Appendix~\ref{appendix: temperature search}.

\subsection{Evaluations}
\label{sec: eval}

\textbf{Evaluation datasets.} 
We use the \textit{test-clean} subset of LibriTTS~\cite{libritts} as the validation set, following~\cite{sfm} to construct prompt-target pairs. 
For testing, we use the \textit{test-en} and \textit{test-zh} of Seed-TTS~\cite{Seed-TTS} test sets, and the \textit{en} and \textit{zh} subsets of CosyVoice~3~\cite{cosyvoice3} test sets.

\textbf{Objective evaluations.} 
We use UTMOS~\cite{utmos} to evaluate naturalness.
Following Seed-TTS~\cite{Seed-TTS}, we use Whisper-large-v3~\cite{whisper} to compute word error rate (WER) for English and Paraformer-zh~\cite{paraformer} to compute character error rate (CER) for Chinese.
For speaker similarity (SIM), we extract speaker embeddings using the WavLM-large~\cite{wavlm} speaker verification model and compute the cosine similarity.

\textbf{Subjective evaluation.} 
We conducted comparative MOS (CMOS, $[-3,3]$) and similarity MOS (SMOS, $[1,5]$) tests to evaluate naturalness and similarity, respectively. In each test, we recruited 20 native English or Chinese listeners on Prolific\footnote{\url{https://www.prolific.com}} with £9 per hour.

\textbf{Evaluation protocol.}
In our experiments, we observe that under a fixed framework, where the model backbone, training procedure, and inference pipeline are kept identical, it is uncommon for one model to outperform another across all objective metrics. Therefore, for hyperparameter selection based on objective evaluation, we consider one model better than another if it achieves better results in at least two of the three metric categories: UTMOS, WER/CER, and SIM.
\section{Experimental results and analysis}
\label{sec: results and analysis}

Table~\ref{tab: objective results} reports the objective results on the Seed-TTS and CosyVoice 3 test sets. 
\textit{Numerical KO} denotes our proposed numerical kinetic-optimal scheduler, \textit{Grid-searched} denotes the grid-searched heuristic MI-DFM scheduler selected on the validation set (Appendix~\ref{appendix: beta search}), and \textit{w/o corrector} indicates removal of the proposed finite-step moment correction.
Under the controlled setting with the same backbone, training procedure, and inference pipeline, GibbsTTS achieves the best overall performance among the evaluated masked discrete generative baselines.
On Seed-TTS test sets, GibbsTTS obtains the best results on all three metrics for test-en, and obtains the best UTMOS and SIM on test-zh with CER close to the best baseline. On CosyVoice 3 test sets, it achieves the best UTMOS for both languages and the best English WER, while remaining close to the best results on the other metrics. These results suggest that the metric-induced path is effective for codec-based zero-shot TTS when combined with the proposed kinetic-optimal scheduler and finite-step correction.

The results also show that scheduler choice is both important and decoding-dependent for masked generative baselines. For example, the MaskGCT scheduler $\kappa_t=\sin(\frac{\pi}{2}t)$ performs poorly with masked DFM, but gives the strongest overall performance among masked DD schedulers. Conversely, schedulers that work well for masked DFM do not necessarily work well for masked DD. This suggests that scheduler design is tightly coupled with the discrete generative formulation.

\textbf{Effect of the kinetic-optimal scheduler.}
Compared with the grid-searched MI-DFM scheduler, the numerical KO scheduler gives better overall performance across the two test sets. It improves all three metrics on Seed-TTS test-en, and improves UTMOS and SIM on test-zh while maintaining competitive CER. On CosyVoice 3, it brings clear gains over the searched scheduler in most metrics, with a small CER regression on the Chinese subset.
Importantly, the numerical KO scheduler does not require downstream hyperparameter search over hand-designed scheduler families. It therefore improves or matches empirical performance while removing a sensitive scheduler-selection procedure.

\textbf{Effect of the finite-step moment correction.}
The finite-step moment correction consistently improves MI-DFM across test sets and scheduler choices. Under the numerical KO scheduler, removing the corrector degrades all objective metrics on both Seed-TTS and CosyVoice 3, and the same trend holds for the grid-searched scheduler. The uniform direction and magnitude of the gap across all subset-metric combinations indicate that the gain is not specific to a particular language or scheduler.
These results are consistent with the motivation of the correction: a first-order CTMC solver is prone to finite-step path-tracking error, and adjusting the jump probability lets the sampler better follow the reference path while preserving the CTMC jump destination distribution.

\textbf{Subjective evaluation results.}
For the listening test, we focus on systems that isolate our two proposed methods, together with the strongest baselines of each family based on the objective evaluation results and the publicly released MaskGCT (\textit{MaskGCT (original)}) for reference.
The evaluated utterances are sampled in equal proportions from the Seed-TTS and CosyVoice 3 test sets.
Since CosyVoice 3 does not provide ground-truth target speech, we cannot include ground-truth samples in the subjective evaluations.
Table~\ref{tab: subjective results} reports the subjective evaluation results. Since GibbsTTS is used as the reference system in CMOS tests, its CMOS scores are zero. All compared systems obtain negative CMOS scores, indicating that listeners prefer GibbsTTS in naturalness. 
For speech similarity, GibbsTTS also achieves the highest SMOS on both English and Chinese evaluations. These subjective results support that the proposed scheduler and correction improve perceptual speech quality, rather than only automatic metrics.

\textbf{Additional analyses are provided in the appendices}, including grid search for the heuristic MI-DFM scheduler (Appendix~\ref{appendix: beta search}), shared versus
per-codebook scheduler construction (Appendix~\ref{appendix: share and per-codebook}), comparison with the per-layer
training strategy used in MaskGCT
(Appendix~\ref{appendix: training objective}).

\begin{table}[t]
  \caption{Objective evaluation results. The best value among generated systems is highlighted in \textbf{bold}.}
  \label{tab: objective results}
  \centering
  \vspace{-4pt}

  \begin{subtable}{\textwidth}
    \caption{Results on Seed-TTS test sets.}
    \label{tab: results seedtts}
    \vspace{-4pt}
    \centering
    \resizebox{1\textwidth}{!}{
    \setlength{\tabcolsep}{1.0mm}{
    \begin{tabular}{ll | ccc | ccc}
      \toprule
      \multirow{2}{*}{\textbf{Method}} 
      & \multirow{2}{*}{\textbf{Scheduler}} 
      & \multicolumn{3}{c}{\textbf{test-en}} 
      & \multicolumn{3}{c}{\textbf{test-zh}} \\
      \cmidrule(r){3-5} \cmidrule(r){6-8}
      && \textbf{UTMOS}$\uparrow$ 
      & \textbf{WER (\%) $\downarrow$} 
      & \textbf{SIM $\uparrow$} 
      & \textbf{UTMOS}$\uparrow$ 
      & \textbf{CER (\%) $\downarrow$} 
      & \textbf{SIM $\uparrow$} \\
      \midrule
      Ground truth & \textemdash
        & 3.527 & 2.020 & 0.734
        & 2.782 & 1.327 & 0.755 \\
      Codec reconstructed & \textemdash
        & 3.407 & 2.229 & 0.695
        & 2.564 & 1.472 & 0.725 \\
      \midrule
      MI-DFM (GibbsTTS) & Numerical KO
        & \textbf{3.651} & \textbf{1.777} & \textbf{0.743}
        & \textbf{2.712} & 1.327 & \textbf{0.790} \\
      MI-DFM w/o corrector & Numerical KO
        & 3.403 & 2.120 & 0.723
        & 2.447 & 1.777 & 0.775 \\
      MI-DFM & Grid-searched
        & 3.617 & 1.793 & 0.729
        & 2.628 & \textbf{1.297} & 0.784 \\
      MI-DFM w/o corrector & Grid-searched
        & 3.380 & 2.070 & 0.711
        & 2.381 & 1.637 & 0.767 \\
      \midrule
      Masked DFM & Closed-form KO
        & 3.639 & 1.969 & 0.742
        & 2.656 & 1.536 & 0.788 \\
      Masked DFM & DiFlow-TTS
        & 3.546 & 1.827 & 0.728 
        & 2.559 & 1.308 & 0.785 \\
      Masked DFM & MaskGCT
        & 3.269 & 2.724 & 0.712
        & 2.195 & 3.140 & 0.762 \\
      \midrule
      Masked DD & Closed-form KO
        & 3.634 & 5.808 & 0.731
        & 2.706 & 6.033 & 0.787 \\
      Masked DD & DiFlow-TTS
        & 2.768 & 9.303 & 0.672
        & 1.825 & 10.711 & 0.734 \\ 
      Masked DD & MaskGCT
        & 3.415 & 2.338 & 0.721
        & 2.387 & 1.583 & 0.776 \\
      \bottomrule
    \end{tabular}}}
  \end{subtable}

  \begin{subtable}{\textwidth}
    \caption{Results on CosyVoice 3 test sets. Ground truth targets are not provided in the sets.}
    \label{tab: results cosyvoice}
    \centering
    \vspace{-4pt}
    \resizebox{1.\textwidth}{!}{
    \setlength{\tabcolsep}{1.0mm}{
    \begin{tabular}{ll | ccc | ccc}
      \toprule
      \multirow{2}{*}{\textbf{Method}} 
      & \multirow{2}{*}{\textbf{Scheduler}} 
      & \multicolumn{3}{c}{\textbf{en}} 
      & \multicolumn{3}{c}{\textbf{zh}} \\
      \cmidrule(r){3-5} \cmidrule(r){6-8}
      && \textbf{UTMOS}$\uparrow$ 
      & \textbf{WER (\%) $\downarrow$} 
      & \textbf{SIM $\uparrow$} 
      & \textbf{UTMOS}$\uparrow$ 
      & \textbf{CER (\%) $\downarrow$} 
      & \textbf{SIM $\uparrow$} \\
      \midrule
      MI-DFM (GibbsTTS) & Numerical KO
        & \textbf{3.238} & \textbf{4.110} & 0.691
        & \textbf{2.438} & 4.144 & 0.780 \\
      MI-DFM w/o corrector & Numerical KO
        & 2.850 & 4.616 & 0.668
        & 2.135 & 5.485 & 0.772 \\
      MI-DFM & Grid-searched
        & 3.009 & 4.506 & 0.674
        & 2.189 & \textbf{3.706} & 0.772 \\
      MI-DFM w/o corrector & Grid-searched
        & 2.616 & 4.547 & 0.653
        & 1.939 & 4.274 & 0.755 \\
      \midrule
      Masked DFM & Closed-form KO
        & 3.049 & 5.162 & \textbf{0.695}
        & 2.294 & 4.855 & \textbf{0.781} \\
      Masked DFM & DiFlow-TTS
        & 2.925 & 4.288 & 0.673
        & 2.141 & 3.727 & 0.777 \\
      Masked DFM & MaskGCT
        & 2.354 & 8.767 & 0.614
        & 1.789 & 7.235 & 0.698 \\
      \midrule
      Masked DD & Closed-form KO
        & 3.042 & 18.353 & 0.677
        & 2.401 & 14.156 & 0.776 \\
      Masked DD & DiFlow-TTS
        & 1.885 & 36.133 & 0.562
        & 1.494 & 29.180 & 0.673 \\ 
      Masked DD & MaskGCT
        & 2.657 & 6.719 & 0.655
        & 1.903 & 4.575 & 0.762 \\
      \bottomrule
    \end{tabular}}}
  \end{subtable}
\end{table}

\begin{table}
  \caption{Subjective evaluation results. $^{\ddagger}$ indicates $p<0.05$, and $^{\dagger}$ indicates $p<0.1$ compared with GibbsTTS. The highest value for each metric is highlighted in \textbf{bold}.}
  \label{tab: subjective results}
  \centering
  \vspace{2pt}
\resizebox{0.7\textwidth}{!}{
\setlength{\tabcolsep}{1.0mm}{
\begin{tabular}{ll | ll | ll}
  \toprule
  \multirow{2}{*}{\textbf{Method}} 
  & \multirow{2}{*}{\textbf{Scheduler}} 
  & \multicolumn{2}{c}{\textbf{en}} 
  & \multicolumn{2}{c}{\textbf{zh}} \\
  \cmidrule(r){3-4} \cmidrule(r){5-6}
  && \textbf{CMOS}$\uparrow$ 
  & \textbf{SMOS}$\uparrow$ 
  & \textbf{CMOS}$\uparrow$ 
  & \textbf{SMOS}$\uparrow$ \\
  \midrule
MI-DFM (GibbsTTS) & Numerical KO
    & \textbf{0} (ref) & \textbf{4.18}
    & \textbf{0} (ref) & \textbf{4.28} \\
MI-DFM w/o corrector & Numerical KO
    & -0.362 $^{\ddagger}$ & 3.86 $^{\ddagger}$
    & -0.459 $^{\ddagger}$ & 4.03 $^{\dagger}$ \\
MI-DFM & Grid-searched
    & -0.257 $^{\ddagger}$ & 4.05
    & -0.153 & 4.21 \\
\midrule
Masked DFM & Closed-form KO
    & -0.229 $^{\dagger}$ & 3.94
    & -0.224 $^{\dagger}$ & 4.11 \\
Masked DFM & DiFlow-TTS
    & -0.286 $^{\ddagger}$ & 3.91 $^{\dagger}$
    & -0.247 $^{\dagger}$ & 4.10 \\
\midrule
Masked DD & MaskGCT
    & -0.743 $^{\ddagger}$ & 3.56 $^{\ddagger}$
    & -0.647 $^{\ddagger}$ & 3.87 $^{\ddagger}$\\
MaskGCT (original) & MaskGCT
    & -0.762 $^{\ddagger}$ & 4.06
    & -0.612 $^{\ddagger}$ & 4.19 \\
    \bottomrule
  \end{tabular}}}
\end{table}

\section{Comparison with SOTA zero-shot TTS systems}
\label{appendix: sota comparison}

\begin{table}[H]
  \caption{Objective evaluation results on Seed-TTS test sets. The best value among generated systems is highlighted in \textbf{bold}.}
\label{tab: sota seedtts}
\vspace{2pt}
\begin{flushleft}
\footnotesize
\textit{Note:} ``Emilia'' indicates whether Emilia is used for training. 
\textsc{Incl.} denotes that Emilia is included in the training data.
\end{flushleft}
  \centering
  \vspace{4pt}
\resizebox{1\textwidth}{!}{
\setlength{\tabcolsep}{1.0mm}{
  \begin{tabular}{lcc | ccc | ccc}
    \toprule
    \multirow{2}{*}{\textbf{Model}} 
  & \multirow{2}{*}{\textbf{Size}} 
  & \multirow{2}{*}{\textbf{Emilia}} 
  & \multicolumn{3}{c}{\textbf{test-en}} 
  & \multicolumn{3}{c}{\textbf{test-zh}} \\
  \cmidrule(r){4-6} \cmidrule(r){7-9}
  &&& \textbf{UTMOS}$\uparrow$ 
  & \textbf{WER (\%) $\downarrow$} 
  & \textbf{SIM $\uparrow$} 
  & \textbf{UTMOS}$\uparrow$ 
  & \textbf{CER (\%) $\downarrow$} 
  & \textbf{SIM $\uparrow$} \\
  \midrule
Ground truth & \textemdash & \textemdash
    & 3.527 & 2.020 & 0.734
    & 2.782 & 1.327 & 0.755 \\
 \midrule
\multicolumn{9}{l}{\textbf{\textit{NAR models with the Codec in MaskGCT}}} \\
Codec reconstructed & \textemdash & \textemdash
    & 3.407 & 2.229 & 0.695
    & 2.564 & 1.472 & 0.725 \\
\underline{GibbsTTS (ours)} & 0.4B & $\checkmark$
    & 3.651 & 1.777 & \textbf{0.743}
    & 2.712 & 1.327 & \textbf{0.790} \\
MaskGCT & 1.5B & $\checkmark$
    & 3.582 & 3.763 & 0.716
    & 2.647 & 2.217 & 0.774 \\ 
 \midrule
\multicolumn{9}{l}{\textbf{\textit{Other NAR models}}} \\
F5-TTS~\cite{f5-tts} & 0.3B & $\checkmark$
    & 3.762 & 2.020 & 0.654
    & 2.962 & 1.741 & 0.747 \\
OmniVoice~\cite{omnivoice} & 0.6B & $\times$
    & 3.896 & 1.475 & 0.741
    & 3.092 & \textbf{0.927} & 0.778 \\ 
\midrule
\multicolumn{9}{l}{\textbf{\textit{AR models}}} \\
Spark-TTS~\cite{spark-tts} & 0.5B & $\checkmark$
    & 3.947 & 2.221 & 0.572
    & 3.283 & 1.374 & 0.658 \\
FireRedTTS-2~\cite{fireredtts-2} & 1.5B & $\times$
    & 3.697 & 2.187 & 0.664
    & 2.842 & 1.145 & 0.728 \\
IndexTTS2~\cite{indextts2} & 1.5B & \textsc{Incl.}
    & 3.651 & 1.735 & 0.707 
    & 2.999 & 1.112 & 0.765 \\
CosyVoice 2~\cite{cosyvoice2} & 0.5B & $\times$
    & 4.164 & 2.506 & 0.656
    & 3.469 & 1.344 & 0.752 \\
CosyVoice 3~\cite{cosyvoice3} & 0.5B & $\times$
    & 3.949 & 2.145 & 0.696
    & 3.325 & 1.192 & 0.779 \\
Qwen3-TTS~\cite{qwen3-tts} & 0.6B & $\times$
    & 4.155 & 1.711 & 0.706
    & 3.477 & 1.129 & 0.765 \\
Qwen3-TTS & 1.7B & $\times$
    & \textbf{4.178} & \textbf{1.434} & 0.712
    & \textbf{3.505} & 1.100 & 0.770 \\
    \bottomrule
  \end{tabular}}}
\end{table}

\begin{table}[H]
  \caption{Objective evaluation results on CosyVoice 3 test sets. The best value for each metric is highlighted in \textbf{bold}.}
  \label{tab: sota cosyvoice}
  \centering
  \vspace{4pt}
\resizebox{1\textwidth}{!}{
\setlength{\tabcolsep}{1.0mm}{
  \begin{tabular}{lcc | ccc | ccc}
    \toprule
    \multirow{2}{*}{\textbf{Model}} 
  & \multirow{2}{*}{\textbf{Size}} 
  & \multirow{2}{*}{\textbf{Emilia}} 
  & \multicolumn{3}{c}{\textbf{en}} 
  & \multicolumn{3}{c}{\textbf{zh}} \\
  \cmidrule(r){4-6} \cmidrule(r){7-9}
  &&& \textbf{UTMOS}$\uparrow$ 
  & \textbf{WER (\%) $\downarrow$} 
  & \textbf{SIM $\uparrow$} 
  & \textbf{UTMOS}$\uparrow$ 
  & \textbf{CER (\%) $\downarrow$} 
  & \textbf{SIM $\uparrow$} \\
  \midrule
\multicolumn{9}{l}{\textbf{\textit{NAR models with the Codec in MaskGCT}}} \\
\underline{GibbsTTS (ours)} & 0.4B & $\checkmark$
    & 3.238 & 4.110 & 0.691
    & 2.438 & 4.144 & \textbf{0.780} \\ 
MaskGCT & 1.5B & $\checkmark$
    & 3.032 & 7.237 & 0.690
    & 2.357 & 6.647 & 0.773 \\ 
 \midrule
\multicolumn{9}{l}{\textbf{\textit{Other NAR models}}} \\
F5-TTS & 0.3B & $\checkmark$
    & 3.188 & 6.582 & 0.624
    & 2.337 & 5.867 & 0.741 \\
OmniVoice & 0.6B & $\times$
    & 3.619 & \textbf{3.182} & \textbf{0.705}
    & 2.884 & 2.735 & 0.766 \\
\midrule
\multicolumn{9}{l}{\textbf{\textit{AR models}}} \\
Spark-TTS & 0.5B & $\checkmark$
    & 3.587 & 6.840 & 0.499
    & 2.977 & 4.802 & 0.671 \\
IndexTTS2 & 1.5B & \textsc{Incl.}
    & 3.372 & 3.987 & 0.664
    & 2.544 & 3.064 & 0.762 \\
CosyVoice 2 & 0.5B & $\times$
    & 3.850 & 6.200 & 0.614
    & 3.042 & 3.939 & 0.748 \\
CosyVoice 3 & 0.5B & $\times$
    & 3.719 & 4.124 & 0.676
    & 2.899 & 3.166 & 0.776 \\
Qwen3-TTS & 0.6B & $\times$
    & 3.680 & 3.636 & 0.641
    & 3.221 & 2.900 & 0.755 \\
Qwen3-TTS & 1.7B & $\times$
    & \textbf{3.931} & 3.406 & 0.670
    & \textbf{3.328} & \textbf{2.653} & 0.754 \\
    \bottomrule
  \end{tabular}}}
\end{table}

Although the primary goal of our experiments is to validate the proposed
algorithms under controlled settings,
we further compare GibbsTTS with recent SOTA zero-shot TTS systems in 
Tables~\ref{tab: sota seedtts} and~\ref{tab: sota cosyvoice}. 
We use their official open-source checkpoints and released inference code.\footnote{FireRedTTS-2 results on the CosyVoice 3 
test sets are not reported. In our runs, its released inference code did not terminate on a subset of these utterances, and imposing a hard iteration cap substantially degraded generation quality.}

These results show that GibbsTTS is not the strongest system in terms of 
naturalness or WER/CER: larger-scale autoregressive (AR) models and recent NAR systems often achieve higher UTMOS or lower WER/CER. 
Nevertheless, GibbsTTS achieves the highest speaker similarity on three of the four evaluated test sets. On the remaining CosyVoice 3 English test set, it obtains the second-best similarity score and is only behind OmniVoice, a concurrent NAR zero-shot TTS system. This indicates that MI-DFM is particularly effective at preserving speaker identity in codec-based zero-shot TTS.

A more direct comparison can be made within the group of NAR models using the Codec in MaskGCT. Across both Seed-TTS and CosyVoice 3 test sets, GibbsTTS consistently outperforms MaskGCT on all reported objective metrics, despite using a smaller model size. The subjective evaluation in Table~\ref{tab: subjective results} shows the same tendency, with GibbsTTS being preferred in naturalness and obtaining higher similarity ratings than MaskGCT. Moreover, GibbsTTS uses a simpler token representation. MaskGCT involves both semantic tokens and codec tokens, whereas GibbsTTS performs generation directly over codec tokens only. 
These results suggest that the proposed MI-DFM-based model and overall
training-inference framework provide an effective way to model codec tokens
directly, without relying on an additional semantic-token stage.

The codec reconstruction results in Table~\ref{tab: sota seedtts} further indicate that codec choice itself affects the UTMOS range of codec-based TTS systems. Even before considering generative modeling errors, the reconstructed speech already shows a noticeable UTMOS gap from the ground-truth speech. When this work was initiated, MaskGCT was still the SOTA codec-based NAR zero-shot TTS system. We therefore conservatively adopt its codec rather than introducing a newer and stronger codec, so that our study focuses more on the comparison of discrete generative modeling strategies.

These comparisons should still be interpreted with caution. The external 
systems differ in model size, training data, tokenizer, codec, architecture, text frontend, and inference pipeline, so the results are not fully controlled. 
In particular, GibbsTTS uses a relatively simple G2P-based frontend, whereas some recent systems may benefit from more advanced text normalization. This can affect intelligibility-related metrics such as WER and CER, and therefore the SOTA comparison should not be interpreted as an isolated evaluation of the proposed discrete generative algorithms. The main evidence for the proposed algorithmic contributions comes from the controlled comparisons and ablation studies in the main experiments. The SOTA comparison mainly demonstrates the practical potential of GibbsTTS as a zero-shot TTS system.

\section{Limitations and future work}
\label{appendix: limitations}

Since the codec used in this work applies \(\ell_2\) normalization to token embeddings, the distance used in our metric-induced path is equivalent to a cosine distance. Other choices of token distance are not further investigated in this work. Furthermore, after introducing the kinetic-optimal scheduler, the construction of the distance matrix itself remains an important direction for future study. For example, we may further optimize the geometry of token embeddings and make it better suited to MI-DFM.

For an arbitrary discrete probability path, \cite{mi-dfm} provides the kinetic-optimal velocity formulation, and this work provides a numerical construction for the kinetic-optimal time scheduler. However, other types of probability paths remain unexplored.

In addition, as discussed in Section~\ref{sec: midfm correction}, we choose the local Fisher--Rao tangent statistic as the reference moment in our corrector. Other choices of reference moments may lead to better performance and deserve further investigation. Moreover, correctors are also discussed in \cite{mi-dfm}, providing additional insights into possible constructions. Therefore, corrector design, as well as more general decoding strategies for MI-DFM, remains a valuable direction for future work.

Finally, although the proposed algorithms are general, we only evaluate them on zero-shot TTS in this work. Their effectiveness in other domains remains to be explored.

\section{Conclusion}

We proposed two general algorithmic contributions for discrete flow matching: a kinetic-optimal time scheduler for prescribed scalar-parameterized discrete probability paths, and a finite-step moment correction for practical CTMC sampling that reduces path-tracking error while preserving the jump destination distribution. We instantiated both for metric-induced discrete flow matching and validated them on codec-based zero-shot TTS. Under controlled comparisons, the resulting GibbsTTS achieves the strongest overall performance against the evaluated masked discrete generative baselines.

\clearpage

\section{Acknowledgements}
This work was supported by JST Moonshot R\&D Grant Number JPMJMS2011 and JST SPRING Grant Number JPMJSP2108.

\bibliographystyle{abbrv}
\bibliography{reference.bib}

@article{f5-tts,
 author    = {Yushen Chen and Zhikang Niu and Ziyang Ma and Keqi Deng and Chunhui Wang and Jian Zhao and Kai Yu and Xie Chen},
 title     = {{F5-TTS:} {A} Fairytaler that Fakes Fluent and Faithful Speech with Flow Matching},
 year      = {2024},
 journal     = {arXiv preprint arXiv:2410.06885},
}

@article{cosyvoice2,
 author    = {Zhihao Du and Yuxuan Wang and Qian Chen and Xian Shi and Xiang Lv and Tianyu Zhao and Zhifu Gao and Yexin Yang and Changfeng Gao and Hui Wang and Fan Yu and Huadai Liu and Zhengyan Sheng and Yue Gu and Chong Deng and Wen Wang and Shiliang Zhang and Zhijie Yan and Jingren Zhou},
 title     = {{CosyVoice 2}: Scalable Streaming Speech Synthesis with Large Language Models},
 year      = {2024},
 journal     = {arXiv preprint arXiv:2412.10117},
}

@article{cosyvoice3,
 author    = {Zhihao Du and Changfeng Gao and Yuxuan Wang and Fan Yu and Tianyu Zhao and Hao Wang and Xiang Lv and Hui Wang and Chongjia Ni and Xian Shi and Keyu An and Guanrou Yang and Yabin Li and Yanni Chen and Zhifu Gao and Qian Chen and Yue Gu and Mengzhe Chen and Yafeng Chen and Shiliang Zhang and Wen Wang and Jieping Ye},
 title     = {{CosyVoice 3}: Towards In-the-wild Speech Generation via Scaling-up and Post-training},
 year      = {2025},
 journal     = {arXiv preprint arXiv:2505.17589},
}

@article{qwen3-tts,
 author    = {Hangrui Hu and Xinfa Zhu and Ting He and Dake Guo and Bin Zhang and Xiong Wang and Zhifang Guo and Ziyue Jiang and Hongkun Hao and Zishan Guo and Xinyu Zhang and Pei Zhang and Baosong Yang and Jin Xu and Jingren Zhou and Junyang Lin},
 title     = {{Qwen3-TTS} Technical Report},
 year      = {2026},
 journal     = {arXiv preprint arXiv:2601.15621},
}

@article{qwen3,
 author    = {An Yang and Anfeng Li and Baosong Yang and Beichen Zhang and Binyuan Hui and Bo Zheng and Bowen Yu and Chang Gao and Chengen Huang and Chenxu Lv and others},
 title     = {{Qwen3} Technical Report},
 year      = {2025},
 journal     = {arXiv preprint arXiv:2505.09388},
}

@inproceedings{utmos,
 author    = {Takaaki Saeki and Detai Xin and Wataru Nakata and Tomoki Koriyama and Shinnosuke Takamichi and Hiroshi Saruwatari},
 title     = {{UTMOS:} UTokyo-SaruLab System for VoiceMOS Challenge 2022},
 year      = {2022},
 pages     = {4521--4525},
 booktitle     = {Interspeech},
}

@inproceedings{whisper,
 author    = {Alec Radford and Jong Wook Kim and Tao Xu and Greg Brockman and Christine McLeavey and Ilya Sutskever},
 title     = {Robust Speech Recognition via Large-Scale Weak Supervision},
 year      = {2023},
 pages     = {28492--28518},
 booktitle     = {International Conference on Machine Learning (ICML)},
}

@inproceedings{vocos,
 author    = {Hubert Siuzdak},
 title     = {Vocos: Closing the gap between time-domain and Fourier-based neural vocoders for high-quality audio synthesis},
 year      = {2024},
 booktitle    = {International Conference on Learning Representations (ICLR)},
}

@inproceedings{adamw,
 author    = {Ilya Loshchilov and Frank Hutter},
 title     = {Decoupled Weight Decay Regularization},
 year      = {2019},
 booktitle    = {International Conference on Learning Representations (ICLR)},
}

@inproceedings{libritts,
  title={{LibriTTS}: A Corpus Derived from {LibriSpeech} for Text-to-Speech},
  author={Heiga Zen and Viet Dang and Rob Clark and Yu Zhang and Ron J. Weiss and Ye Jia and Zhifeng Chen and Yonghui Wu},
  booktitle={Proc. Interspeech},
  pages={1526--1530},
  year={2019},
}

@article{VALL-E,
  author       = {Chengyi Wang and
                  Sanyuan Chen and
                  Yu Wu and
                  Ziqiang Zhang and
                  Long Zhou and
                  Shujie Liu and
                  Zhuo Chen and
                  Yanqing Liu and
                  Huaming Wang and
                  Jinyu Li and
                  Lei He and
                  Sheng Zhao and
                  Furu Wei},
  title        = {Neural Codec Language Models are Zero-Shot Text to Speech Synthesizers},
  journal    = {arXiv preprint arXiv:2301.02111v1},
  year         = {2023},
}

@article{Seed-TTS,
  author       = {Philip Anastassiou and
                  Jiawei Chen and
                  Jitong Chen and
                  Yuanzhe Chen and
                  Zhuo Chen and
                  Ziyi Chen and
                  Jian Cong and
                  Lelai Deng and
                  Chuang Ding and
                  Lu Gao and
                  Mingqing Gong and
                  Peisong Huang and
                  Qingqing Huang and
                  Zhiying Huang and
                  Yuanyuan Huo and
                  Dongya Jia and
                  Chumin Li and
                  Feiya Li and
                  Hui Li and
                  Jiaxin Li and
                  Xiaoyang Li and
                  Xingxing Li and
                  Lin Liu and
                  Shouda Liu and
                  Sichao Liu and
                  Xudong Liu and
                  Yuchen Liu and
                  Zhengxi Liu and
                  Lu Lu and
                  Junjie Pan and
                  Xin Wang and
                  Yuping Wang and
                  Yuxuan Wang and
                  Zhen Wei and
                  Jian Wu and
                  Chao Yao and
                  Yifeng Yang and
                  Yuanhao Yi and
                  Junteng Zhang and
                  Qidi Zhang and
                  Shuo Zhang and
                  Wenjie Zhang and
                  Yang Zhang and
                  Zilin Zhao and
                  Dejian Zhong and
                  Xiaobin Zhuang},
  title        = {Seed-TTS: {A} Family of High-Quality Versatile Speech Generation Models},
  journal    = {arXiv preprint arXiv:2406.02430},
  year         = {2024},
}

@article{wavlm,
      title={{WavLM}: Large-Scale Self-Supervised Pre-Training for Full Stack Speech Processing}, 
      author={Sanyuan Chen and Chengyi Wang and Zhengyang Chen and Yu Wu and Shujie Liu and Zhuo Chen and Jinyu Li and Naoyuki Kanda and Takuya Yoshioka and Xiong Xiao and Jian Wu and Long Zhou and Shuo Ren and Yanmin Qian and Yao Qian and Jian Wu and Michael Zeng and Xiangzhan Yu and Furu Wei},
      volume       = {16},
      number       = {6},
      pages        = {1505--1518},
      year={2022},
      journal={IEEE Journal of Selected Topics in Signal Processing},
}

@inproceedings{dit,
 author    = {William Peebles and Saining Xie},
 title     = {Scalable Diffusion Models with Transformers},
 year      = {2023},
 booktitle     = {{IEEE/CVF} International Conference on Computer Vision (ICCV)},
}

@misc{stableTTS,
    author = {},
title={{StableTTS}},
year={2024},
  howpublished = {\url{https://github.com/KdaiP/StableTTS}}
}

@inproceedings{dfm,
 author    = {Itai Gat and Tal Remez and Neta Shaul and Felix Kreuk and Ricky T. Q. Chen and Gabriel Synnaeve and Yossi Adi and Yaron Lipman},
 title     = {Discrete Flow Matching},
 year      = {2024},
 booktitle     = {Annual Conference on Neural Information Processing Systems (NeurIPS)},
}

@inproceedings{mi-dfm,
 author    = {Neta Shaul and Itai Gat and Marton Havasi and Daniel Severo and Anuroop Sriram and Peter Holderrieth and Brian Karrer and Yaron Lipman and Ricky T. Q. Chen},
 title     = {Flow Matching with General Discrete Paths: {A} Kinetic-Optimal Perspective},
 year      = {2025},
 booktitle    = {International Conference on Learning Representations (ICLR)},
}

@inproceedings{generalized-masked-diffusion,
 author    = {Jiaxin Shi and Kehang Han and Zhe Wang and Arnaud Doucet and Michalis K. Titsias},
 title     = {Simplified and Generalized Masked Diffusion for Discrete Data},
 year      = {2024},
 booktitle     = {Annual Conference on Neural Information Processing Systems (NeurIPS)},
}

@inproceedings{var,
 author    = {Keyu Tian and Yi Jiang and Zehuan Yuan and Bingyue Peng and Liwei Wang},
 title     = {Visual Autoregressive Modeling: Scalable Image Generation via Next-Scale Prediction},
 year      = {2024},
 booktitle     = {Annual Conference on Neural Information Processing Systems (NeurIPS)},
}

@inproceedings{maskgct,
 author    = {Yuancheng Wang and Haoyue Zhan and Liwei Liu and Ruihong Zeng and Haotian Guo and Jiachen Zheng and Qiang Zhang and Xueyao Zhang and Shunsi Zhang and Zhizheng Wu},
 title     = {{MaskGCT}: Zero-Shot Text-to-Speech with Masked Generative Codec Transformer},
 year      = {2025},
 booktitle     = {International Conference on Learning Representations (ICLR)},
}

@inproceedings{emilia,
 author    = {Haorui He and Zengqiang Shang and Chaoren Wang and Xuyuan Li and Yicheng Gu and Hua Hua and Liwei Liu and Chen Yang and Jiaqi Li and Peiyang Shi and Yuancheng Wang and Kai Chen and Pengyuan Zhang and Zhizheng Wu},
 title     = {Emilia: An Extensive, Multilingual, and Diverse Speech Dataset for Large-Scale Speech Generation},
 year      = {2024},
 booktitle     = {{IEEE} Spoken Language Technology Workshop (SLT)},
}

@article{rope,
      title={{RoFormer}: Enhanced transformer with Rotary Position Embedding}, 
      author={Jianlin Su and Murtadha H. M. Ahmed and Yu Lu and Shengfeng Pan and Wen Bo and Yunfeng Liu},
      volume       = {568},
      pages        = {127063},
      year={2024},
      journal={Neurocomputing},
}

@article{swiglu,
	author={Noam Shazeer},
	title={{GLU} Variants Improve Transformer},
	year={2020},
	journal={arXiv preprint arXiv:2002.05202},
}

@inproceedings{rmsnorm,
 author    = {Biao Zhang and Rico Sennrich},
 title     = {Root Mean Square Layer Normalization},
 year      = {2019},
 booktitle     = {Annual Conference on Neural Information Processing Systems (NeurIPS)},
}

@inproceedings{dac,
 author    = {Rithesh Kumar and Prem Seetharaman and Alejandro Luebs and Ishaan Kumar and Kundan Kumar},
 title     = {High-Fidelity Audio Compression with Improved {RVQGAN}},
 year      = {2023},
 booktitle     = {Annual Conference on Neural Information Processing Systems (NeurIPS)},
}

@article{omnivoice,
 author    = {Han Zhu and Lingxuan Ye and Wei Kang and Zengwei Yao and Liyong Guo and Fangjun Kuang and Zhifeng Han and Weiji Zhuang and Long Lin and Daniel Povey},
 title     = {{OmniVoice}: Towards Omnilingual Zero-Shot Text-to-Speech with Diffusion Language Models},
 year      = {2026},
 journal     = {arXiv preprint arXiv:2604.00688},
}

@article{spark-tts,
 author    = {Xinsheng Wang and Mingqi Jiang and Ziyang Ma and Ziyu Zhang and Songxiang Liu and Linqin Li and Zheng Liang and Qixi Zheng and Rui Wang and Xiaoqin Feng and Weizhen Bian and Zhen Ye and Sitong Cheng and Ruibin Yuan and Zhixian Zhao and Xinfa Zhu and Jiahao Pan and Liumeng Xue and Pengcheng Zhu and Yunlin Chen and Zhifei Li and Xie Chen and Lei Xie and Yike Guo and Wei Xue},
 title     = {{Spark-TTS}: An Efficient LLM-Based Text-to-Speech Model with Single-Stream Decoupled Speech Tokens},
 year      = {2025},
 journal     = {arXiv preprint arXiv:2503.01710},
}

@article{fireredtts-2,
  author    = {Kun Xie and Feiyu Shen and Junjie Li and Fenglong Xie and Xu Tang and Yao Hu},
  title     = {{FireRedTTS-2}: Towards Long Conversational Speech Generation for Podcast and Chatbot},
  journal   = {arXiv preprint arXiv:2509.02020},
  year      = {2025},
}

@inproceedings{indextts2,
  author    = {Siyi Zhou and Yiquan Zhou and Yi He and Xun Zhou and Jinchao Wang and Wei Deng and Jingchen Shu},
  title     = {{IndexTTS2}: {A} Breakthrough in Emotionally Expressive and Duration-Controlled Auto-Regressive Zero-Shot Text-to-Speech},
  booktitle = {Proceedings of the AAAI Conference on
Artificial Intelligence},
  year      = {2026},
}

@article{encodec,
  author    = {Alexandre D{\'e}fossez and Jade Copet and Gabriel Synnaeve and Yossi Adi},
  title     = {High Fidelity Neural Audio Compression},
  journal   = {Transactions on Machine Learning Research},
  year      = {2022},
}

@book{fisher-amari,
  title={Information geometry and its applications},
  author={Shun-ichi Amari},
  volume={194},
  year={2016},
  publisher={Springer}
}

@inproceedings{paraformer,
 author    = {Zhifu Gao and ShiLiang Zhang and Ian McLoughlin and Zhijie Yan},
 title     = {Paraformer: Fast and accurate parallel transformer for non-autoregressive end-to-end speech recognition},
 year      = {2022},
 pages     = {2063–-2067},
 booktitle     = {Interspeech},
}

@inproceedings{sfm,
 author    = {Dong Yang and Yiyi Cai and Yuki Saito and Lixu Wang and Hiroshi Saruwatari},
 title     = {Shallow Flow Matching for Coarse-to-Fine Text-to-Speech Synthesis},
 year      = {2025},
 booktitle     = {Annual Conference on Neural Information Processing Systems (NeurIPS)},
}

@inproceedings{fudoki,
 author    = {Jin Wang and Yao Lai and Aoxue Li and Shifeng Zhang and Jiacheng Sun and Ning Kang and Chengyue Wu and Zhenguo Li and Ping Luo},
 title     = {{FUDOKI:} Discrete Flow-based Unified Understanding and Generation via Kinetic-Optimal Velocities},
 year      = {2025},
 booktitle     = {Annual Conference on Neural Information Processing Systems (NeurIPS)},
}

@article{ursa-metric-video,
 author    = {Haoge Deng and Ting Pan and Fan Zhang and Yang Liu and Zhuoyan Luo and Yufeng Cui and Wenxuan Wang and Chunhua Shen and Shiguang Shan and Zhaoxiang Zhang and Xinlong Wang},
 title     = {Uniform Discrete Diffusion with Metric Path for Video Generation},
 year      = {2025},
 journal     = {arXiv preprint arXiv:2510.24717},
}

@inproceedings{wam-flow,
 author    = {Yifang Xu and Jiahao Cui and Feipeng Cai and Zhihao Zhu and Hanlin Shang and Shan Luan and Mingwang Xu and Neng Zhang and Yaoyi Li and Jia Cai and Siyu Zhu},
 title     = {{WAM-Flow:} Parallel Coarse-to-Fine Motion Planning via Discrete Flow Matching for Autonomous Driving},
 year      = {2026},
 booktitle     = {{IEEE/CVF} International Conference on Computer Vision and Pattern Recognition (CVPR)},
}

@inproceedings{next-omni,
 author    = {Run Luo and Xiaobo Xia and Lu Wang and Longze Chen and Renke Shan and Jing Luo and Min Yang and Tat{-}Seng Chua},
 title     = {{NExT-OMNI:} Towards Any-to-Any Omnimodal Foundation Models with Discrete Flow Matching},
 year      = {2026},
 booktitle     = {International Conference on Learning Representations (ICLR)},
}

@article{diflow,
 author    = {Ngoc-Son Nguyen and Thanh V. T. Tran and Hieu-Nghia Huynh-Nguyen and Truong-Son Hy and Van Nguyen},
 title     = {{DiFlow-TTS}: Compact and Low-Latency Zero-Shot Text-to-Speech with Factorized Discrete Flow Matching},
 year      = {2024},
 journal     = {arXiv preprint arXiv:2407.05407},
}

@article{soundstream,
  author       = {Neil Zeghidour and Alejandro Luebs and Ahmed Omran and Jan Skoglund and Marco Tagliasacchi},
  title        = {{SoundStream}: An End-to-End Neural Audio Codec},
  journal      = {{IEEE}/{ACM} Transactions on Audio, Speech, and Language Processing},
  volume       = {30},
  pages        = {495--507},
  year         = {2022},
}

@article{soundstorm,
 author    = {Zal{\'{a}}n Borsos and Matthew Sharifi and Damien Vincent and Eugene Kharitonov and Neil Zeghidour and Marco Tagliasacchi},
 title     = {{SoundStorm}: Efficient Parallel Audio Generation},
 year      = {2023},
 journal     = {arXiv preprint arXiv:2305.09636},
}

@article{rvq,
  author       = {A. Vasuki and P.T. Vanathi},
  title        = {A review of vector quantization techniques},
  journal      = {{IEEE} Potentials},
  volume       = {25},
  pages        = {39--47},
  year         = {2006},
}

@article{scheduler-fisher,
 author    = {Leo Zhang and Saifuddin Syed},
 title     = {The Cosine Schedule is Fisher-Rao-Optimal for Masked Discrete Diffusion Models},
 year      = {2025},
 journal     = {arXiv preprint arXiv:2508.04884},
}

\appendix
\clearpage
\section{Related works}
\label{appendix: related works}

\textbf{Codec-based discrete acoustic modeling.}
Recent TTS systems that model speech with discrete acoustic representations mainly rely on neural audio codecs~\cite{soundstream,encodec,dac}, which convert waveforms into sequences of discrete codec tokens. 
Most modern neural codec models adopt residual vector quantization (RVQ)~\cite{rvq,soundstream}, where each frame is encoded by a stack of codebooks that progressively refine the quantization residual, so that earlier codebooks capture coarser acoustic information and later codebooks encode finer residual details.
VALL-E~\cite{VALL-E} formulated zero-shot TTS as language modeling over codec tokens, and subsequent autoregressive TTS systems~\cite{spark-tts,fireredtts-2,indextts2,qwen3-tts} further scaled this paradigm through larger-scale training and improved codec models.
Compared with AR codec-based TTS, NAR codec-based generation remains less explored.
Current NAR systems generate codec tokens mainly through masked generative modeling formulations~\cite{soundstorm,maskgct,diflow}.
Most of these works treat codec tokens as symbolic discrete
labels and rely on a learned embedding lookup, leaving the
geometric structure of the codec-token embedding space
unexploited at the generative modeling level. In this work, we
follow the codec-based TTS model route and build a NAR
TTS model on top of the acoustic codec released by MaskGCT,
whose encoder follows DAC~\cite{dac} and decoder
follows Vocos~\cite{vocos}. Different from prior
works, our method further exploits the geometry of the
codec-token latent space through metric-induced discrete
flow matching.

\textbf{Masked generative codec modeling.}
MaskGCT~\cite{maskgct} is a representative codec-based NAR zero-shot TTS model. It first predicts semantic tokens~\cite{wavlm} from the input text and then generates acoustic codec tokens conditioned on the predicted semantic tokens.
For discrete token modeling, MaskGCT adopts masked generative modeling and performs training and inference in a per-layer strategy over RVQ codebooks.
Our model differs from MaskGCT in three aspects.
First, we use a direct text-to-codec prediction paradigm without an intermediate semantic-token stage.
Second, instead of a mask-source probability path, our model generates codec tokens through metric-induced discrete flow matching, whose intermediate distributions exploit codec-token embedding geometry.
Third, we perform full-codebook training and inference, jointly predicting all RVQ codebooks at each generation step.
Accordingly, the MaskGCT-style masked discrete diffusion~\cite{generalized-masked-diffusion} and the standard masked discrete flow matching~\cite{dfm} are included as controlled baselines in our experiments.

\textbf{Concurrent work: OmniVoice.}
OmniVoice~\cite{omnivoice} is a concurrent zero-shot TTS model based on masked discrete diffusion language models. 
Similar to our work, it adopts a direct text-to-codec paradigm with a full-codebook NAR prediction strategy. 
The two works, however, have different focuses. 
OmniVoice targets omnilingual zero-shot TTS through system-level designs. In particular, instead of training a DiT-style backbone with explicit timestep conditioning from scratch, it initializes a bidirectional Transformer from a pre-trained LLM (Qwen3-0.6B~\cite{qwen3}).
This allows the model to inherit linguistic knowledge and subword tokenization from the pre-trained LLM, avoiding explicit grapheme-to-phoneme conversion and language-specific text normalization.
Combined with large-scale multilingual training, these designs lead to strong intelligibility on multilingual benchmarks. 
Our work instead focuses on the algorithmic aspects of discrete generative modeling.
Specifically, the proposed kinetic-optimal scheduler and finite-step moment correction for MI-DFM are orthogonal to system-level choices such as LLM initialization, tokenizer design, frontend simplification, multilingual scaling, and backbone architecture.
We therefore view OmniVoice as a complementary concurrent effort, and its engineering designs provide promising directions for future improvement of our system.





\clearpage
\section{Proofs}
\label{appendix: proof}

\setcounter{lemma}{0}
\counterwithin{lemma}{section}
\renewcommand{\thelemma}{\arabic{lemma}}
\setcounter{proposition}{0}

\begin{lemma}[Scheduler-invariant Fisher-Rao length]
\label{lemma: appendix length-invariance}
Let $p_t(x)=p(x;\kappa_t)$, where $p(x;\kappa)$ is a differentiable probability path with $\kappa\in[0,\kappa_{\max}]$, and $\kappa_t$ is a monotonic scheduler with $\kappa_0=0$ and $\kappa_1=\kappa_{\max}$. 
Then the Fisher-Rao length is invariant to the choice of $\kappa_t$:
\begin{equation}
    \ell[\kappa_t]
    =
    \int_0^1 \|\dot p_t\|_{g_{FR}}dt
    =
    \int_0^{\kappa_{\max}}\sqrt{\mathcal I(\kappa)}d\kappa
    \triangleq L,
\end{equation}
where
\begin{equation}
    \mathcal{I}(\kappa)
    =
    \sum_x
    \frac{
    \left(\partial_\kappa p(x;\kappa)\right)^2
    }{
    p(x;\kappa)
    }
\end{equation}
is the Fisher information of the path parameter $\kappa$.
\end{lemma}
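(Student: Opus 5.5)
The argument is a chain-rule computation followed by a change of variables in the length integral. By the chain rule, $\dot p_t(x)=\partial_\kappa p(x;\kappa)\big|_{\kappa=\kappa_t}\,\dot\kappa_t$ for each $x$. Substituting this into the Fisher--Rao norm factors out the scheduler speed:
\begin{equation*}
\|\dot p_t\|_{g_{FR}}^2=\sum_x\frac{\dot p_t(x)^2}{p_t(x)}=\dot\kappa_t^2\sum_x\frac{(\partial_\kappa p(x;\kappa_t))^2}{p(x;\kappa_t)}=\dot\kappa_t^2\,\mathcal I(\kappa_t).
\end{equation*}
Taking square roots gives $\|\dot p_t\|_{g_{FR}}=|\dot\kappa_t|\sqrt{\mathcal I(\kappa_t)}$.

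Monotonicity of the scheduler, together with $\kappa_0=0<\kappa_1=\kappa_{\max}$, means $\kappa_t$ is nondecreasing, so $\dot\kappa_t\ge 0$ and $|\dot\kappa_t|=\dot\kappa_t$. The length then becomes
\begin{equation*}
\ell[\kappa_t]=\int_0^1\sqrt{\mathcal I(\kappa_t)}\,\dot\kappa_t\,dt.
\end{equation*}
The substitution $\kappa=\kappa_t$ sends the limits $t=0,1$ to $\kappa=0,\kappa_{\max}$, which yields $\int_0^{\kappa_{\max}}\sqrt{\mathcal I(\kappa)}\,d\kappa=L$. This value depends only on the geometric curve $p(x;\kappa)$ and not on $\kappa_t$.

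The main obstacle is justifying the change of variables, since two points need care.

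\textbf{Regularity of the scheduler.} The substitution requires $\kappa_t$ to be absolutely continuous. I would assume it is differentiable, as the paper implicitly does by writing $\dot\kappa_t$. With that assumption, the substitution formula for a monotone absolutely continuous map applies even when $\dot\kappa_t$ vanishes on intervals, because those intervals contribute zero to both sides.

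\textbf{Boundary behaviour of the integrand.} The quantity $\sqrt{\mathcal I}$ may be improper at the endpoints. For example, $p(x;\kappa)\to 0$ for some $x$, or $\kappa_{\max}=\infty$ in the metric-induced case. I would handle this by proving the identity on $[\epsilon,1-\epsilon]$, where everything is finite, and then letting $\epsilon\to0$ by monotone convergence, since the integrand is nonnegative. Both sides then agree as extended reals.

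The terms with $p(x;\kappa)=0$ also need a convention. Because $p\ge 0$ attains a minimum there, $\partial_\kappa p=0$ at such points, so I would adopt the usual convention $0/0=0$ for those terms in both $\|\dot p_t\|_{g_{FR}}$ and $\mathcal I$. The pointwise identity above then holds everywhere.
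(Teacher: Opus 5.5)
Your proof is correct and follows the paper's argument exactly: the chain rule gives $\|\dot p_t\|_{g_{FR}}=|\dot\kappa_t|\sqrt{\mathcal I(\kappa_t)}$, monotonicity removes the absolute value, and the substitution $\kappa=\kappa_t$ finishes the proof; your regularity remarks go beyond anything the paper discusses. One small caveat: the claim that $\partial_\kappa p=0$ wherever $p=0$ holds only at interior $\kappa$. For example, the mixture path $p(x\mid x_1;\kappa)=(1-\kappa)p(x)$ vanishes at $\kappa=1$ with nonzero derivative whenever $p(x)>0$, so endpoint zeros are handled by your $\epsilon$-truncation rather than by the $0/0=0$ convention.
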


\textit{Proof.}
By the chain rule,
\begin{equation}
    \dot{p}_t(x)
    =
    \partial_\kappa p(x;\kappa_t)\dot{\kappa}_t .
\end{equation}
Therefore, the Fisher-Rao speed satisfies
\begin{equation}
    \|\dot{p}_t\|_{g_{FR}}
    =
    \sqrt{\left(
    \sum_x
    \frac{
    \left(\partial_\kappa p(x;\kappa_t)\dot{\kappa}_t\right)^2
    }{
    p(x;\kappa_t)
    }
    \right)} 
    =
    |\dot{\kappa}_t|
    \sqrt{\mathcal{I}(\kappa_t)}.
\end{equation}
Since $\kappa_t$ is monotonic increasing, $\dot{\kappa}_t\geq 0$, and hence $|\dot{\kappa}_t|=\dot{\kappa}_t$.
Thus,
\begin{equation}
    \ell[\kappa_t]
    =
    \int_0^1
    \sqrt{\mathcal{I}(\kappa_t)}
    \dot{\kappa}_t dt.
\end{equation}
Using the change of variables $\kappa=\kappa_t$, we obtain
\begin{equation}
    \ell[\kappa_t]
    =
    \int_{\kappa_0}^{\kappa_1}
    \sqrt{\mathcal{I}(\kappa)}d\kappa
    =
    \int_0^{\kappa_{\max}}
    \sqrt{\mathcal{I}(\kappa)}d\kappa,
\end{equation}
which is independent of the specific scheduler
$\kappa_t$.
\qed

\vspace{30pt}

\begin{lemma}[Constant-speed characterization]
\label{lemma: appendix constant-speed}
For a prescribed probability path $p(x;\kappa)$, let
$p_t^*(x)=p(x;\kappa_t^*)$ be the path induced by the kinetic-optimal scheduler $\kappa_t^*$. Then $\kappa_t^*$ minimizes $\mathcal{E}_{FR}[\kappa_t]$ over $\kappa_t\in\mathcal{K}$ if and only if it satisfies
\begin{equation}
    \|\dot p_t^*\|_{g_{FR}} = L,
    \qquad \forall t\in[0,1],
\end{equation}
where $L$ is the scheduler-invariant Fisher-Rao length defined in
Lemma~\ref{lemma: appendix length-invariance}.
\end{lemma}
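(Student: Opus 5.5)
The plan is to prove Lemma~\ref{lemma: appendix constant-speed} with the Cauchy--Schwarz inequality on $[0,1]$, using Lemma~\ref{lemma: appendix length-invariance} to fix the length term.

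For any admissible $\kappa_t\in\mathcal{K}$, write the speed as $v(t)=\|\dot p_t\|_{g_{FR}}=\dot\kappa_t\sqrt{\mathcal I(\kappa_t)}\ge 0$. This factorization was already derived in the proof of Lemma~\ref{lemma: appendix length-invariance}. The energy is $\mathcal{E}_{FR}[\kappa_t]=\int_0^1 v(t)^2\,dt$, and the length is $\int_0^1 v(t)\,dt=L$ by Lemma~\ref{lemma: appendix length-invariance}. Cauchy--Schwarz with the constant function $1$ then gives
\begin{equation}
    L^2=\Big(\int_0^1 v(t)\cdot 1\,dt\Big)^2\le \int_0^1 v(t)^2\,dt\int_0^1 1\,dt=\mathcal{E}_{FR}[\kappa_t].
\end{equation}
So $L^2$ is a lower bound on the energy that is uniform over $\mathcal{K}$.

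Next I would handle the two directions. For the equality case, Cauchy--Schwarz is tight if and only if $v$ is a.e.\ proportional to $1$, that is, $v\equiv c$. The length identity then forces $c=L$.

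\emph{(If.)} Suppose $\|\dot p^*_t\|_{g_{FR}}=L$ for all $t$. Then the energy equals $L^2$, which is the lower bound, so $\kappa^*_t$ is a minimizer.

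\emph{(Only if.)} Suppose $\kappa^*_t$ minimizes the energy. Then its energy is at most that of any competitor, in particular one whose energy is exactly $L^2$. Combined with the lower bound, its energy equals $L^2$. Equality in Cauchy--Schwarz then gives $v\equiv L$ a.e., and continuity of $v$ (for $C^1$ schedulers) upgrades this to every $t$.

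The main obstacle is showing that the bound $L^2$ is actually attained within $\mathcal{K}$, which the ``only if'' direction needs. I would construct the competitor by arc-length reparametrization. Define $F(\kappa)=\frac{1}{L}\int_0^\kappa\sqrt{\mathcal I(\xi)}\,d\xi$ and set $\kappa_t=F^{-1}(t)$. This requires $\mathcal I>0$, so that $F$ is strictly increasing and invertible, and $L>0$. Then $\dot\kappa_t=L/\sqrt{\mathcal I(\kappa_t)}\ge 0$, the boundary conditions $\kappa_0=0$ and $\kappa_1=\kappa_{\max}$ hold, and $v\equiv L$.

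Two smaller issues need to be noted. First, the ``a.e.\ versus every $t$'' gap in the equality case. Second, the degenerate case $L=0$, where every scheduler has zero energy and both sides of the equivalence are trivially true.
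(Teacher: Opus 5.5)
Your proof is correct and is essentially the paper's argument: Cauchy--Schwarz gives $\mathcal{E}_{FR}[\kappa_t]\ge L^2$ with equality iff the Fisher--Rao speed is constant, and Lemma~\ref{lemma: appendix length-invariance} fixes that constant to $L$. Your explicit competitor $F^{-1}(t)$ is not circular, since it uses only the chain rule, and it supplies what the paper's ``only if'' direction tacitly assumes, namely that $\inf_{\mathcal{K}}\mathcal{E}_{FR}=L^2$; it needs $\mathcal{I}>0$, which is a hypothesis of Proposition~\ref{proposition: appendix ko scheduler} rather than of this lemma, but reparametrizing with $\max\{\mathcal{I},\epsilon\}$ and letting $\epsilon\to 0$ gives the same infimum without that hypothesis.
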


\textit{Proof.}
By the Cauchy-Schwarz inequality,
\begin{equation}
    \mathcal{E}_{FR}[\kappa_t]
    =
    \int_0^1 \|\dot p_t\|_{g_{FR}}^2 dt
    \geq
    \left(
    \int_0^1 \|\dot p_t\|_{g_{FR}} dt
    \right)^2.
\end{equation}
From Lemma~\ref{lemma: appendix length-invariance}, the length term is independent of
the scheduler and equals $L$. Hence,
\begin{equation}
    \mathcal{E}_{FR}[\kappa_t] \geq L^2.
\end{equation}
The equality holds if and only if $\|\dot p_t\|_{g_{FR}}$ is constant with
respect to $t$. Since
\begin{equation}
    \int_0^1 \|\dot p_t\|_{g_{FR}}dt = L,
\end{equation}
this constant must be $L$. 
Therefore, the kinetic-optimal scheduler satisfies
$\|\dot p_t\|_{g_{FR}}=L$.
\qed

\newpage
\begin{proposition}[Kinetic-optimal time scheduler]
\label{proposition: appendix ko scheduler}
Let $p_t(x)=p(x;\kappa_t)$ be a prescribed path parameterized by a monotonic scheduler $\kappa_t$, with $\kappa_0=0$ and $\kappa_1=\kappa_{\max}$.
Assume $\mathcal{I}(\kappa)>0$ on $[0,\kappa_{\max}]$. The kinetic-optimal scheduler and its time derivative are given by
\begin{equation}
    \kappa_t^*
    =
    F^{-1}(t),
    \quad 
    \dot{\kappa}_t^*
    =
    \frac{L}{
    \sqrt{\mathcal{I}(\kappa_t^*)}
    },
    \quad
    \text{where }
    F(\kappa)
    =
    \frac{
    \int_0^\kappa \sqrt{\mathcal{I}(\xi)} d\xi
    }{L}.
\end{equation}
\end{proposition}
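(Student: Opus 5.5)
The plan is to combine Lemma~\ref{lemma: appendix constant-speed} with the speed formula from the proof of Lemma~\ref{lemma: appendix length-invariance}, and then solve the resulting first-order ODE for $\kappa_t$ by separation of variables.

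\textbf{Step 1 (ODE for the scheduler).} By the chain-rule computation in Lemma~\ref{lemma: appendix length-invariance}, every $\kappa_t\in\mathcal{K}$ satisfies $\|\dot p_t\|_{g_{FR}}=\dot\kappa_t\sqrt{\mathcal I(\kappa_t)}$. Lemma~\ref{lemma: appendix constant-speed} says $\kappa_t$ minimizes $\mathcal E_{FR}$ over $\mathcal{K}$ if and only if this speed equals $L$ for all $t$. So the optimal schedulers are exactly the solutions in $\mathcal K$ of
\begin{equation}
\sqrt{\mathcal I(\kappa_t)}\,\dot\kappa_t = L,\qquad \kappa_0=0 .
\end{equation}
Dividing by $\sqrt{\mathcal I(\kappa_t)}$, which is legitimate since $\mathcal I>0$, gives the stated formula $\dot\kappa_t^*=L/\sqrt{\mathcal I(\kappa_t^*)}$.

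\textbf{Step 2 (integration and inversion).} Define $G(\kappa)=\int_0^\kappa\sqrt{\mathcal I(\xi)}\,d\xi = L\,F(\kappa)$. The ODE reads $\frac{d}{dt}G(\kappa_t)=L$. With $G(\kappa_0)=G(0)=0$ this integrates to $G(\kappa_t)=Lt$, that is, $F(\kappa_t)=t$.

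Because $\mathcal I>0$ on $[0,\kappa_{\max}]$, we have $L>0$, so $F$ is well defined. Moreover $F'(\kappa)=\sqrt{\mathcal I(\kappa)}/L>0$, so $F$ is a strictly increasing continuous bijection from $[0,\kappa_{\max}]$ onto $[0,1]$, with $F(\kappa_{\max})=1$ by the definition of $L$. Hence $F^{-1}$ exists and $\kappa_t^*=F^{-1}(t)$.

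The inverse function theorem then gives $\dot\kappa_t^*=1/F'(\kappa_t^*)=L/\sqrt{\mathcal I(\kappa_t^*)}$, consistent with Step 1.

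\textbf{Step 3 (admissibility and uniqueness).} First check that $\kappa^*\in\mathcal K$:
\begin{itemize}
\item $\kappa_0^*=F^{-1}(0)=0$;
\item $\kappa_1^*=F^{-1}(1)=\kappa_{\max}$;
\item $\dot\kappa^*_t>0$, so $\kappa^*$ is monotone.
\end{itemize}
Its speed is identically $L$, so by Lemma~\ref{lemma: appendix constant-speed} it is a minimizer.

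Conversely, any minimizer satisfies the ODE. Since $G$ is strictly increasing, the relation $G(\kappa_t)=Lt$ determines $\kappa_t$ uniquely. Thus $F^{-1}(t)$ is the kinetic-optimal scheduler, not merely one of several.

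\textbf{Main obstacle.} The delicate point is regularity. Lemma~\ref{lemma: appendix constant-speed} and the chain rule implicitly assume $\kappa_t$ is (absolutely) continuous and differentiable almost everywhere. I would therefore take $\mathcal K$ to consist of absolutely continuous monotone maps, which is also what the change of variables in Lemma~\ref{lemma: appendix length-invariance} needs.

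Integrability of $\sqrt{\mathcal I}$ near the endpoints is a second concern. In the metric-induced case the endpoint is only approximated by a finite $\beta_{\max}$, so $\kappa_{\max}$ is finite and $\mathcal I$ is continuous and positive on a compact interval. That makes $L$ finite and $F$ a $C^1$ diffeomorphism, which suffices for the argument above.
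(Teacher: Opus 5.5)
Your proposal is correct and follows essentially the same route as the paper. You combine the chain-rule identity $\|\dot p_t\|_{g_{FR}}=\dot\kappa_t\sqrt{\mathcal I(\kappa_t)}$ with the constant-speed lemma, integrate to obtain $F(\kappa_t^*)=t$, and invert using strict monotonicity of $F$; your Step~3 (checking that $F^{-1}$ lies in $\mathcal K$ and attains the bound, giving existence and uniqueness) and your regularity remarks go slightly beyond the paper, which only derives the necessary condition, but they do not change the argument.
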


\textit{Proof.}
By the chain rule,
\begin{equation}
    \dot p_t(x)
    =
    \partial_\kappa p(x;\kappa_t)\dot\kappa_t .
\end{equation}
Substituting this into the Fisher-Rao metric gives
\begin{equation}
    \|\dot p_t\|_{g_{FR}}^2
    =
    \dot\kappa_t^2
    \sum_x
    \frac{
    \left(\partial_\kappa p(x;\kappa_t)\right)^2
    }{
    p(x;\kappa_t)
    }
    =
    \dot\kappa_t^2 \mathcal{I}(\kappa_t).
\end{equation}
Since $\dot{\kappa}_t\geq 0$,
\begin{equation}
    \|\dot p_t\|_{g_{FR}}
    =
    \dot\kappa_t \sqrt{\mathcal{I}(\kappa_t)}.
\end{equation}
By Lemma~\ref{lemma: appendix constant-speed}, 
\begin{equation}
    \dot\kappa_t^* \sqrt{\mathcal{I}(\kappa_t^*)}
    =
    L,
    \quad
    \text{then }
    \dot\kappa_t^*
    =
    \frac{L}{\sqrt{\mathcal{I}(\kappa_t^*)}}.
\end{equation}
Integrating both sides and applying the change of variable yields
\begin{equation}
    \int_0^{\kappa_t^*}
    \sqrt{\mathcal{I}(\xi)}
    d\xi
    =
    Lt.
\end{equation}
Taking $t=1$ gives
\begin{equation}
    L
    =
    \int_0^{\kappa_{\max}}
    \sqrt{\mathcal{I}(\xi)}
    d\xi .
\end{equation}
Normalizing this accumulated length by the total length $L$, we define
\begin{equation}
    F(\kappa)
    =
    \frac{
    \int_0^{\kappa}
    \sqrt{\mathcal{I}(\xi)}
    d\xi
    }{
    \int_0^{\kappa_{\max}}
    \sqrt{\mathcal{I}(\xi)}
    d\xi
    }
    =
    \frac{
    \int_0^{\kappa}
    \sqrt{\mathcal{I}(\xi)}
    d\xi
    }{L}
    .
\end{equation}
Then, we obtain
\begin{equation}
    F(\kappa_t^*) = t.
\end{equation}
Since $\mathcal{I}(\kappa)>0$ on $[0,\kappa_{\max}]$, $F$ is strictly increasing and hence invertible, then
\begin{equation}
    \kappa_t^* = F^{-1}(t).
\end{equation}
\qed
\clearpage
\section{Kinetic-optimal scheduler construction for metric-induced paths}
\label{appendix: algo-numer}

Algorithm~\ref{algo: find_beta_max} determines $\beta_{\max}$ by requiring the endpoint distribution to be sufficiently concentrated on the target token. For each codebook $c$ and target token $x_1$, we evaluate the target probability $p_{\beta}^{(c)}(x_1\mid x_1)$ under the metric-induced conditional distribution, and define the endpoint concentration score as the minimum target probability over all codebooks and target tokens. Starting from an initial positive value, we double it until the score exceeds $1-\epsilon$, and then perform binary search on the resulting interval to find the smallest $\beta_{\max}$ satisfying the same criterion. This gives a finite numerical endpoint that approximates the limiting delta distribution up to tolerance $\epsilon$.

Algorithm~\ref{algo: KO_schedule} describes the numerical construction of the kinetic-optimal scheduler. Given the distance matrices $\{\mathbf{D}_c\}_{c=1}^{C}$, we first evaluate the metric-induced conditional distribution on a finite inverse-temperature grid $\{\beta_i\}_{i=1}^{I}$. For each grid point, the Fisher information is computed as the variance of the distance to the target token, and then averaged over target tokens and codebooks to obtain a global Fisher information $V_i$. Using these values, we approximate the cumulative Fisher--Rao arc length by numerical quadrature. The scheduler is then obtained by uniformly redistributing the accumulated arc length over $[0,1]$ and inverting the resulting $\ell$--$\beta$ relation by linear interpolation. Finally, the derivative table is computed from the constant-speed condition, yielding the lookup tables $\{\beta_j^*\}_{j=1}^{T}$ and $\{\dot{\beta}_j^*\}_{j=1}^{T}$.

\begin{algorithm}[H]
\caption{Determine the finite endpoint $\beta_{\max}$}
\label{algo: find_beta_max}
\begin{algorithmic}[1]
    \Require distance matrices $\{\mathbf{D}_c\in\mathbb{R}^{s\times s}\}_{c=1}^{C}$,
    endpoint tolerance $\epsilon=10^{-8}$,
    initial upper bound $\beta_{\mathrm{init}}=1$
    \Ensure Finite endpoint $\beta_{\max}$

    \State Define the metric-induced conditional distribution:
    \[
        p_{\beta}^{(c)}(x \mid x_1)
        =
        \frac{
        \exp\!\left(-\beta \mathbf{D}_c(x,x_1)\right)
        }{
        \sum_{y\in[s]}
        \exp\!\left(-\beta \mathbf{D}_c(y,x_1)\right)
        },
        \qquad c=1,\dots,C.
    \]

    \State Define the endpoint concentration score:
    \[
        f(\beta)
        =
        \min_{c\in\{1,\dots,C\}}
        \min_{x_1\in[s]}
        p_{\beta}^{(c)}(x_1 \mid x_1).
    \]

    \State $\beta_{\mathrm{hi}} \gets \beta_{\mathrm{init}}$

    \While{$f(\beta_{\mathrm{hi}}) < 1-\epsilon$}
        \State $\beta_{\mathrm{hi}} \gets 2\beta_{\mathrm{hi}}$
    \EndWhile

    \State Find the smallest $\beta_{\max}\in[0,\beta_{\mathrm{hi}}]$ by binary search such that
    \[
        f(\beta_{\max}) \geq 1-\epsilon.
    \]

    \State \Return $\beta_{\max}$
\end{algorithmic}
\end{algorithm}

\begin{algorithm}[H]
\caption{Compute the kinetic-optimal scheduler}
\label{algo: KO_schedule}
\begin{algorithmic}[1]
    \Require distance matrices $\{\mathbf{D}_c\}_{c=1}^{C}$,
    number of time points $T=1024$, grid size $I=4096$,
    maximum inverse temperature $\beta_{\max}$,
    numerical constant $\epsilon=10^{-8}$
    \Ensure scheduler table $\{\beta_j^*\}_{j=1}^{T}$,
    derivative table $\{\dot{\beta}_j^*\}_{j=1}^{T}$

    \State Construct a uniform inverse-temperature grid:
    $\quad
        0=\beta_1
        <
        \beta_2
        <
        \cdots
        <
        \beta_I
        =
        \beta_{\max}.
    $

    \For{$i=1,\dots,I$}
        \For{$c=1,\dots,C$}
            \For{$x_1\in[s]$}
                \State Compute the metric-induced conditional distribution
                \[
                    p_{\beta_i}^{(c)}(x\mid x_1)
                    =
                    \frac{
                    \exp\!\left(
                    -\beta_i \mathbf{D}_c(x,x_1)
                    \right)
                    }{
                    \sum_{y\in[s]}
                    \exp\!\left(
                    -\beta_i \mathbf{D}_c(y,x_1)
                    \right)
                    },
                    \quad x\in[s].
                \]

                \State Compute the conditional mean distance
                \[
                    \mu_{c,x_1}(\beta_i)
                    =
                    \sum_{x\in[s]}
                    p_{\beta_i}^{(c)}(x\mid x_1)
                    \mathbf{D}_c(x,x_1).
                \]

                \State Compute the conditional distance variance
                \[
                    \sigma_{c,x_1}^2(\beta_i)
                    =
                    \sum_{x\in[s]}
                    p_{\beta_i}^{(c)}(x\mid x_1)
                    \mathbf{D}_c(x,x_1)^2
                    -
                    \mu_{c,x_1}(\beta_i)^2.
                \]
            \EndFor

            \State Average over target tokens:
            \[
                V_{c,i}
                =
                \frac{1}{s}
                \sum_{x_1\in[s]}
                \sigma_{c,x_1}^2(\beta_i).
            \]
        \EndFor

        \State Average over codebooks to obtain the global Fisher information:
        \[
            V_i
            =
            \frac{1}{C}
            \sum_{c=1}^{C} V_{c,i}.
        \]
    \EndFor

    \State Compute the cumulative Fisher--Rao arc length by the trapezoidal rule:
    $\quad
        \ell_1 = 0.
    $

    \For{$i=2,\dots,I$}
        \State
        \[
            \ell_i
            =
            \ell_{i-1}
            +
            \frac{
            \sqrt{\max\{V_i,\epsilon\}}
            +
            \sqrt{\max\{V_{i-1},\epsilon\}}
            }{2}
            \left(
            \beta_i-\beta_{i-1}
            \right).
        \]
    \EndFor

    \State Construct uniform time points:
    $\quad
        t_j=\frac{j-1}{T-1},
        \quad j=1,\dots,T.
    $

    \For{$j=1,\dots,T$}
        \State Set the target Fisher--Rao arc length:
        $\quad
            \ell_j^* = t_j \ell_I.
        $

        \State Locate the interval:
        $\quad
            \ell_{i_j-1}
            \leq
            \ell_j^*
            \leq
            \ell_{i_j},
            \quad
            i_j\in\{2,\dots,I\}.
        $

        \State Linearly interpolate the inverse map between
        $(\ell_{i_j-1},\beta_{i_j-1})$ and
        $(\ell_{i_j},\beta_{i_j})$:
        \[
            a_j
            =
            \frac{
            \ell_j^*-\ell_{i_j-1}
            }{
            \ell_{i_j}-\ell_{i_j-1}+\epsilon
            },
            \quad
            \beta_j^*
            =
            \beta_{i_j-1}
            +
            a_j
            \left(
            \beta_{i_j}
            -
            \beta_{i_j-1}
            \right).
        \]

        \State Interpolate the Fisher information at $\beta_j^*$:
        \[
            \widetilde{a}_j
            =
            \frac{
            \beta_j^*-\beta_{i_j-1}
            }{
            \beta_{i_j}-\beta_{i_j-1}+\epsilon
            },
            \quad
            V_j^*
            =
            V_{i_j-1}
            +
            \widetilde{a}_j
            \left(
            V_{i_j}-V_{i_j-1}
            \right).
        \]

        \State Compute the derivative of the Fisher-uniform scheduler:
        \[
            \dot{\beta}_j^*
            =
            \frac{
            \ell_I
            }{
            \sqrt{\max\{V_j^*,\epsilon\}}
            }.
        \]
    \EndFor

    \State \Return $\{\beta_j^*\}_{j=1}^{T}$,
    $\{\dot{\beta}_j^*\}_{j=1}^{T}$

\end{algorithmic}
\end{algorithm}

\clearpage
\section{Recovering the closed-form kinetic-optimal scheduler for mixture paths}
\label{appendix: mixture KO}

This appendix applies Proposition~\ref{proposition: ko scheduler} to the mixture path and recovers its closed-form kinetic-optimal scheduler. This serves as a consistency check of the proposed formulation.

For the mixture $p_t(x)$ defined in Eq~\ref{eq: mixture pt}, we consider it as 
$p(x \mid x_1;\kappa)
=(1-\kappa)\,p(x) + \kappa\,\delta_{x_1}(x)$.

For a fixed target token $x_1$, let $p_1 = p(x_1)$, then
\begin{equation}
p(x_1 \mid x_1;\kappa) = p_1 + (1-p_1)\kappa,
\qquad
p(x \mid x_1;\kappa) = (1-\kappa)p(x), \ \ x \neq x_1,
\end{equation}
and the Fisher information with respect to $\kappa$ is
\begin{equation}
    \mathcal{I}_{x_1}(\kappa)
    =
    \sum_x
    \frac{\bigl(\partial p(x \mid x_1;\kappa)\bigr)^2}
         {p(x \mid x_1;\kappa)}
    =
    \frac{1-p_1}
    {(1-\kappa)\bigl(p_1 + (1-p_1)\kappa\bigr)},
    \label{eq: fisher_mix}
\end{equation}

and the cumulative Fisher-Rao arc-length $\ell(\kappa)$ is
\begin{equation}
    \ell(\kappa)
    = \int_0^\kappa \sqrt{\mathcal{I}_{x_1}(\xi)} d\xi 
    = \int_0^\kappa \sqrt{ \frac{1-p_1}{(1-\xi)\bigl(p_1 + (1-p_1)\xi\bigr)} } d\xi.
\end{equation}

Let $y(\xi) = p_1 + (1-p_1)\xi$, then 
\begin{align}
    \ell(\kappa) 
    &= \int_{p_1}^{y(\kappa)} \sqrt{ \frac{(1-p_1)^2}{y(1-y)} } \frac{dy}{1-p_1} \\
    &= \int_{p_1}^{y(\kappa)} \frac{1}{\sqrt{y(1-y)}} dy \\
    &= 2 \arcsin\left(\sqrt{y(\kappa)}\right) - 2 \arcsin(\sqrt{p_1}) \\
    &= 2 \arcsin\left(\sqrt{p_1 + (1-p_1)\kappa}\right) - 2 \arcsin(\sqrt{p_1}).
\end{align}

\begin{equation}
    L
    = \ell(1) 
    = 2 \left( \frac{\pi}{2} - \arcsin(\sqrt{p_1}) \right) 
    = 2 \arccos(\sqrt{p_1})
\end{equation}

\begin{equation}
    F(\kappa)
    = \frac{\arcsin\left(\sqrt{p_1 + (1-p_1)\kappa}\right) - \arcsin(\sqrt{p_1})}{\arccos(\sqrt{p_1})}
\end{equation}

By setting $F(\kappa_t) = t$ and solving for $\kappa_t$:
\begin{equation}
    \arcsin\left(\sqrt{p_1 + (1-p_1)\kappa_t}\right) 
    = t\arccos(\sqrt{p_1}) + \arcsin(\sqrt{p_1})
\end{equation}

Let $\Omega(x_1) = \arccos(\sqrt{p_1})$, then
\begin{equation}
    \arcsin\left(\sqrt{p_1 + (1-p_1)\kappa_t}\right) 
    = t\Omega(x_1) + \arcsin(\sqrt{p_1})
    = t\Omega(x_1) + \frac{\pi}{2} - \Omega(x_1)
\end{equation}

Then 
\begin{equation}
    \sqrt{p_1 + (1-p_1)\kappa_t}= \cos\left((1-t)\Omega(x_1)\right),
\end{equation}

\begin{equation}
    \kappa_t 
    = \frac{\cos^2\left((1-t)\Omega(x_1)\right) - p_1}{1 - p_1}
    = 1 + \frac{\cos^2\left((1-t)\Omega(x_1)\right) - 1}{1 - p_1}
    = 1 - \frac{\sin^2\left((1-t)\Omega(x_1)\right)}{\sin^2(\Omega(x_1))},
\end{equation}
which is the closed-form kinetic-optimal scheduler as shown in Eq.~\ref{eq: mixture ko}.
\clearpage
\section{Training and inference of GibbsTTS}
\label{appendix: algo-model}

\begin{algorithm}[H]
\caption{Training of GibbsTTS}
\label{algo: train}
\begin{algorithmic}[1]
    \Require model $\theta$, 
    number of codebooks $C$,
    precomputed scheduler table $\{\beta_j^*\}_{j=1}^{T}$,
    distance matrices $\{\mathbf{D}_c\}_{c=1}^{C}$

    \Repeat
        \State Sample batch $(x_1, \mathrm{cond})$
        \quad $\triangleright$ cond: text and language ID
        \State Sample $t\sim\mathcal{U}[0,1]$
        \State $\beta_t \leftarrow \textsc{LinearInterp}(t, \{\beta_j^*\}_{j=1}^{T})$

        \State Sample $r\sim\mathcal{U}(0,0.3)$
        \State $m \leftarrow \mathrm{round}(rN)$
        \quad $\triangleright$ $N$: valid token length
        \State Construct prediction mask
        \quad $\triangleright$ prompt region: 0, target region: 1
        \[
            M_i=\mathbbm{1}[i > m].
        \]

        \For{$c=1,\dots,C$ \textbf{in parallel}}
            \State Sample tokens
            \[
                \widetilde x_t^{i,c}
                \sim
                \mathrm{Categorical}
                \left(
                    \mathrm{softmax}_{x\in[s]}(-\beta_t
                \mathbf{D}_c(x, x_1^{i,c}))
                \right)
                \quad \triangleright \text{by Gumbel-Max trick}.
            \]
        \EndFor

        \State Construct input
        \[
            x_t^{i,c} 
            = 
            \widetilde x_t^{i,c} \text{ if } M_i = 1 \text{ else } x_1^{i,c}.
        \]

        \State Predict target-token distribution
        \[
            p_{1|t}^{\theta}
            =
            \theta(x_t,t,\mathrm{cond}).
        \]
        
        \State Define codebook-wise loss weights
        \[
            w_c
            =
            1-\frac{c-1}{C}.
        \]

        \State Compute loss
        \[
            \mathcal{L}(\theta)
            =
            \frac{
            \sum_{i=1}^{N}\sum_{c=1}^{C}
            M_i w_c
            \left[
            -\log p_{1|t}^{\theta,i,c}
            (
            x_1^{i,c}
            \mid x_t,t,\mathrm{cond}
            )
            \right]
            }{
            \sum_{i=1}^{N}\sum_{c=1}^{C} M_i w_c
            }.
        \]

        \State Update $\theta$ via gradient descent on $\mathcal{L}(\theta)$

    \Until{Training end}
\end{algorithmic}
\end{algorithm}

\begin{algorithm}[tb]
\caption{Inference of GibbsTTS}
\label{algo: infer}
\begin{algorithmic}[1]
    \Require model $\theta$, conditioning $\mathrm{cond}$,
    prompt $x_{\mathrm{pr}}$, target length $N$,
    number of steps $K$, sampling temperature $\tau$,
    distance matrices $\{\mathbf{D}_c\}_{c=1}^{C}$,
    scheduler tables $\{\beta_j^*\}_{j=1}^{T}$ and $\{\dot{\beta}_j^*\}_{j=1}^{T}$

    \State Construct time grid $t_k=k/K$, $k=0,\dots,K$.

    \State Initialize target tokens
    \[
        x_0^{i,c}\sim \mathcal{U}\{1,\dots,s\},
        \qquad i=1,\dots,N,\quad c=1,\dots,C .
    \]
    \State Set $x_t \leftarrow [x_{\mathrm{pr}},x_0]$

    \For{$k=0,\dots,K-1$}
        \State $t \leftarrow t_k$, \quad $h \leftarrow t_{k+1}-t_k$

        \State Interpolate
        \[
            \beta_t \leftarrow \textsc{LinearInterp}(t,\{\beta_j^*\}_{j=1}^{T}),
            \quad
            \beta_{t+h} \leftarrow \textsc{LinearInterp}(t+h,\{\beta_j^*\}_{j=1}^{T}),
        \]
        \[
            \dot{\beta}_t \leftarrow \textsc{LinearInterp}(t,\{\dot{\beta}_j^*\}_{j=1}^{T}).
        \]

        \For{$i=1,\dots,N$ and $c=1,\dots,C$ \textbf{in parallel}}
            \State Sample
            \[
                \hat{x}_1^{i,c}
                \sim
                p_{1|t}^{\theta,i,c, \tau}
                (\cdot\mid x_t,t,\mathrm{cond})
                \quad \triangleright \text{by Gumbel-Max trick}.
            \]

            \State Define
            \[  
                z=x_t^{i,c},
                \quad
                \hat{x}_1=\hat{x}_1^{i,c},
                \quad
                \text{and for }
                x\in[s],
                \text{ let }
                d_x = D_c(x, \hat{x}_1), 
                \quad d_z = D_c(z, \hat{x}_1).
            \]

            \State Compute
            \[
                p_t(x\mid \hat{x}_1)
                =
                \mathrm{softmax}_{x\in[s]}(-\beta_t d_x),
            \quad
                u_t(x,z\mid \hat{x}_1)
                =
                p_t(x\mid \hat{x}_1)
                \left[
                \dot{\beta}_t(d_z-d_x)
                \right]_+ .
            \]

            \State Compute jump intensity and destination distribution
            \[
                \lambda_t(z\mid \hat{x}_1)
                =
                \sum_{x\in [s]} u_t(x,z\mid \hat{x}_1),
                \qquad
                \pi_t(x\mid z,\hat{x}_1)
                =
                \frac{
                u_t(x,z\mid \hat{x}_1)
                }{
                \lambda_t(z\mid \hat{x}_1)
                } .
            \]

            \State Compute the first-order jump probability
            \[
                \rho_{\mathrm{base}}
                =
                1-\exp(-h\lambda_t(z\mid \hat{x}_1)).
            \]

            \State Compute the reference distribution
            \[
                p_{t+h}(x\mid \hat{x}_1)
                =
                \mathrm{softmax}_{x\in[s]}(-\beta_{t+h}d_x).
            \]

            \State Compute
            \[
                A
                =
                d_z-\sum_{x\in[s]}p_{t+h}(x\mid \hat{x}_1)d_x,
                \quad
                B
                =
                \sum_{x\in[s]}
                \pi_t(x\mid z,\hat{x}_1)(d_z-d_x),
                \quad
                \rho^\star=\frac{A}{B}.
            \]
            
            \State Set
            \[
                \rho
                =
                \begin{cases}
                \rho^\star,
                & \text{if } \lambda_t(z\mid\hat{x}_1)>0
                ,B \neq0,
                \text{ and } 0\leq \rho^\star\leq 1,\\
                \rho_{\mathrm{base}},
                & \text{otherwise}.
                \end{cases}
            \]

            \State Sample $Z_{\text{jump}}\sim\mathcal{U}[0,1]$.
            \If{$Z_{\text{jump}}\leq \rho$ and $\lambda_t(z\mid\hat{x}_1)>0$}
                \State Sample
                \[
                    x_{t+h}^{i,c}\sim \pi_t(\cdot\mid z,\hat{x}_1).
                \]
            \Else
                \State Keep the current token:
                \[
                    x_{t+h}^{i,c}\leftarrow z.
                \]
            \EndIf
        \EndFor
        \State Set $x_t \leftarrow x_{t+h}$.
    \EndFor
    \State \Return the target part of $x_t$
\end{algorithmic}
\end{algorithm}

\clearpage
\section{Exact recovery for the mixture path}
\label{appendix: mixture-consistency}

For the standard mixture path, both the instantaneous rate and the finite-step
transition probability admit closed forms. We show that, in this case, the
proposed moment correction recovers the exact finite-step transition over
\([t,t+h]\), and therefore introduces no additional approximation. This result
serves as a consistency check. The correction is mainly useful when the exact
finite-step transition is not available in a simple closed form, as in the
metric-induced paths considered in Section~\ref{sec: method2}.

We work with the standard mixture path
\[
p_t(x\mid x_0,x_1)
=
(1-\kappa_t)\delta_{x_0}(x)
+
\kappa_t\delta_{x_1}(x),
\]
where \(\kappa_t\) is monotone increasing and \(x_0\neq x_1\). In this case,
the CTMC has only one possible non-trivial transition, from \(x_0\) to \(x_1\),
with instantaneous rate
\[
    \lambda_t
    =
    \frac{\dot{\kappa}_t}{1-\kappa_t}.
\]

The exact probability of jumping from \(x_0\) to \(x_1\) over the finite interval
\([t,t+h]\), conditioned on \(X_t=x_0\), is
\[
\begin{aligned}
\rho_{\mathrm{exact}}
&=
1-\exp\!\left(
-\int_t^{t+h}
\frac{\dot{\kappa}_s}{1-\kappa_s}\,ds
\right)
=
\frac{\kappa_{t+h}-\kappa_t}{1-\kappa_t}.
\end{aligned}
\]
The standard first-order solver freezes the instantaneous rate at time \(t\),
giving
\[
\rho_{\mathrm{base}}
=
1-\exp(-h\lambda_t)
=
1-\exp\!\left(
-\frac{h\dot\kappa_t}{1-\kappa_t}
\right),
\]
which agrees with \(\rho_{\mathrm{exact}}\) only in the limit \(h\to0\).

We now show that the proposed moment correction recovers
\(\rho_{\mathrm{exact}}\). For the mixture path, the jump destination
distribution is deterministic:
\[
    \pi_t(\cdot\mid x_0,x_1)=\delta_{x_1}.
\]
We choose the scalar moment as the target-state indicator,
\[
    \phi_t(x\mid x_1)=\mathbbm{1}\{x=x_1\}.
\]
When the current state is \(x_0\), we have
\[
    \phi_t(x_0\mid x_1)=0,
    \qquad
    \bar\phi_t(x_0,x_1)
    =
    \mathbb E_{y\sim \pi_t(\cdot\mid x_0,x_1)}
    [\phi_t(y\mid x_1)]
    =
    1.
\]
We take the reference moment to be the state-conditional finite-step moment,
namely the probability of being at \(x_1\) at time \(t+h\) given \(X_t=x_0\):
\[
    m_{t+h}(x_0,x_1)
    =
    \mathbb P(X_{t+h}=x_1\mid X_t=x_0)
    =
    \frac{\kappa_{t+h}-\kappa_t}{1-\kappa_t}.
\]
Substituting these quantities into the generic moment-correction formula in
Eq.~\ref{eq: generic corr rho} gives
\[
    \rho^\star
    =
    \frac{
    m_{t+h}(x_0,x_1)-\phi_t(x_0\mid x_1)
    }{
    \bar\phi_t(x_0,x_1)
    -
    \phi_t(x_0\mid x_1)
    }
    =
    \frac{\kappa_{t+h}-\kappa_t}{1-\kappa_t}
    =
    \rho_{\mathrm{exact}}.
\]

Thus, for the standard mixture path, the correction recovers the known exact finite-step transition. Since this transition is already available in closed form, the correction is redundant in this special case and mainly serves as a consistency check.

For metric-induced paths, however, the corresponding state-conditional finite-step moment is generally not available in closed form, since it would require solving the finite-step transition kernel of a time-inhomogeneous CTMC. Section~\ref{sec: method2} therefore uses the unconditional expectation under \(p_{t+h}\) as a practical reference moment, which makes the metric-induced correction a finite-step approximation that tracks a chosen scalar moment, rather than an exact CTMC transition.

\clearpage
\section{Hyperparameter search of $\beta$}
\label{appendix: beta search}

As introduced in Section~\ref{sec: mi-dfm}, for the original MI-DFM method proposed in~\cite{mi-dfm}, we need to select well-performing $a$ and $c$ for \[\beta_t = c (\frac{t}{1 - t})^a.\]

To reduce training costs, we conduct hyperparameter search on the Base variants and select the hyperparameters according to their performance on the validation set.
Although we explored a wider range of candidate values, Table~\ref{tab: beta_search} reports only the values around the finally selected hyperparameter for clarity.

\begin{table}[H]
  \caption{Hyperparameter search results for the MI-DFM on the validation set. The best value for each metric is highlighted in \textbf{bold}.}
  \label{tab: beta_search}
  \centering
  \vspace{4pt}
\resizebox{0.5\textwidth}{!}{
\begin{tabular}{ll | ccc}
  \toprule
  \textbf{$a$}
  & \textbf{$c$}
  & \textbf{UTMOS}$\uparrow$ 
  & \textbf{WER (\%) $\downarrow$} 
  & \textbf{SIM $\uparrow$} \\
  \midrule
1 & 1
    & 3.688 & 3.190 & 0.685 \\
2.5 & 1
    & 3.973 & 3.139 & 0.704 \\
5 & 1
    & 3.995 & 3.200 & \textbf{0.720} \\
7.5 & 1
    & 3.859 & 3.397 & 0.705 \\
10 & 1
    & \textbf{4.002} & 3.442 & 0.718 \\
20 & 1
    & 3.882 & 4.345 & 0.699 \\
    \midrule
5 & 0.5
    & 3.522 & 24.594 & 0.670 \\
5 & 1.5
    & 3.903 & \textbf{3.074} & 0.687 \\
    \bottomrule
  \end{tabular}}
\end{table}

Following the evaluation protocol described in Section~\ref{sec: eval}, we select $a=5$ and $c=1$ for the Large variant experiments.

\section{Shared and per-codebook schedulers}
\label{appendix: share and per-codebook}

Since the proposed kinetic-optimal scheduler for MI-DFM does not require hyperparameter search, a natural question is whether each RVQ codebook should use an individual scheduler. We therefore compare a shared scheduler with per-codebook schedulers on the GibbsTTS-Base variant.

\begin{table}[H]
  \caption{Ablation study on shared and per-codebook kinetic-optimal schedulers. The best value for each metric is highlighted in \textbf{bold}.}
  \label{tab: scheduler_ablation}
  \centering
  \vspace{4pt}
\resizebox{1.\textwidth}{!}{
\begin{tabular}{l | ccc | ccc}
  \toprule
  \multirow{2}{*}{\textbf{Method}} 
  & \multicolumn{3}{c}{\textbf{Seed-TTS test-en}} 
  & \multicolumn{3}{c}{\textbf{CosyVoice 3 en}} \\
  \cmidrule(r){2-4} \cmidrule(r){5-7}
  & \textbf{UTMOS}$\uparrow$ 
  & \textbf{WER (\%) $\downarrow$} 
  & \textbf{SIM $\uparrow$} 
  & \textbf{UTMOS}$\uparrow$ 
  & \textbf{WER (\%) $\downarrow$} 
  & \textbf{SIM $\uparrow$} \\
  \midrule
Shared scheduler
    & 3.631 & 1.961 & \textbf{0.711}
    & \textbf{3.120} & \textbf{7.224} & \textbf{0.649} \\
Per-codebook scheduler
    & \textbf{3.657} & \textbf{1.902} & 0.705
    & 3.072 & 7.661 & 0.647 \\
    \bottomrule
  \end{tabular}}
\end{table}

As shown in Table~\ref{tab: scheduler_ablation}, the per-codebook scheduler does not bring consistent improvements over the shared scheduler. Although it slightly improves UTMOS and WER on Seed-TTS test-en, the shared scheduler performs better in speaker similarity and achieves better results on all metrics of CosyVoice 3 en.

Moreover, the per-codebook scheduler requires storing separate $\beta$ and $\dot{\beta}$ lookup tables for different RVQ codebooks, introducing additional memory overhead. Therefore, we use a shared kinetic-optimal scheduler in our final model.

\clearpage
\section{Training objective}
\label{appendix: training objective}

As discussed in Section~\ref{sec: model}, we adopt full-codebook training and inference, where all RVQ codebooks are predicted jointly. We further introduce codebook-wise loss weights in the training objective.

We first discuss the motivation for full-codebook training and inference. As mentioned in Appendix~\ref{appendix: related works}, the concurrent work OmniVoice~\cite{omnivoice} adopts a similar full-codebook prediction strategy. 
However, in our setting, where the model is trained from scratch with a relatively small number of parameters, treating all RVQ codebooks equally makes optimization unstable in the early stage of training. In particular, we observed gradient explosion when no codebook-wise weighting was used. 
Since earlier RVQ codebooks capture coarser acoustic information, while later codebooks mainly encode residual details with progressively reduced perceptual importance, we apply linearly decayed codebook-wise weights, as shown in Lines 12--13 of Algorithm~\ref{algo: train}. This reduces the contribution of later codebooks and stabilizes training, allowing the model to converge reliably.

We also compare our full-codebook strategy with the per-layer training and inference strategy used in MaskGCT. The results are shown in Table~\ref{tab: training_objective_ablation}.

\begin{table}[H]
  \caption{Ablation study on full-codebook and per-layer training/inference strategies. The best value for each metric is highlighted in \textbf{bold}.}
  \label{tab: training_objective_ablation}
  \centering
  \vspace{4pt}
\resizebox{1\textwidth}{!}{
\begin{tabular}{lr | ccc | ccc}
  \toprule
  \multirow{2}{*}{\textbf{Method}} 
  & \multirow{2}{*}{\textbf{NFE}} 
  & \multicolumn{3}{c}{\textbf{Seed-TTS test-en}} 
  & \multicolumn{3}{c}{\textbf{CosyVoice 3 en}} \\
  \cmidrule(r){3-5} \cmidrule(r){6-8}
  && \textbf{UTMOS}$\uparrow$ 
  & \textbf{WER (\%) $\downarrow$} 
  & \textbf{SIM $\uparrow$} 
  & \textbf{UTMOS}$\uparrow$ 
  & \textbf{WER (\%) $\downarrow$} 
  & \textbf{SIM $\uparrow$} \\
  \midrule
Full-codebook & 32
    & 3.631 & \textbf{1.961} & \textbf{0.711}
    & 3.120 & \textbf{7.224} & \textbf{0.649} \\
\midrule
\multirow{5}{*}{Per-layer}
& 32
    & 3.648 & 4.048 & 0.672
    & 3.048 & 14.366 & 0.593 \\
& 64
    & 3.745 & 3.218 & 0.674
    & \textbf{3.233} & 16.769 & 0.594 \\
& 66
    & \textbf{3.751} & 3.377 & 0.675
    & 3.226 & 16.988 & 0.592\\
& 82
    & 3.357 & 4.098 & 0.650
    & 2.727 & 15.677 & 0.572 \\
& 192
    & 3.536 & 4.056 & 0.663
    & 2.955 & 15.636 & 0.588 \\
    \bottomrule
  \end{tabular}}
\end{table}

Since the codec that we use contains 12 RVQ codebooks, the total NFE in the per-layer strategy is allocated across the codebooks. Following MaskGCT, we consider the allocation
\[
[40, 16, 1, 1, 1, 1, 1, 1, 1, 1, 1, 1],
\]
which results in 66 total steps. We also evaluate several other allocations:
\[
\begin{aligned}
32 \text{ steps}  &: [16, 6, 1, 1, 1, 1, 1, 1, 1, 1, 1, 1],\\
64 \text{ steps}  &: [40, 14, 1, 1, 1, 1, 1, 1, 1, 1, 1, 1],\\
82 \text{ steps}  &: [16, 6, 6, 6, 6, 6, 6, 6, 6, 6, 6, 6],\\
192 \text{ steps} &: [16, 16, 16, 16, 16, 16, 16, 16, 16, 16, 16, 16].
\end{aligned}
\]

The results show that, under our model and experimental setting, the per-layer strategy used in MaskGCT generally performs worse than our full-codebook strategy. Although per-layer inference can allocate more refinement steps to earlier codebooks, it is less efficient when the total NFE is fixed, since the steps are distributed across RVQ layers instead of jointly updating all codebooks at each step. In contrast, full-codebook inference updates all codebooks simultaneously, which leads to better intelligibility and speaker similarity in our experiments.

In addition, we initially attempted to include the autoregressive (AR) model as a baseline. However, even with codebook-wise loss weighting, the AR model failed to learn stable full-codebook prediction. The training loss did not decrease effectively, and the generated samples did not form normal speech. 
This observation suggests that, in our setting, NAR modeling is more suitable for full-codebook prediction across all RVQ codebooks.

\clearpage
\section{Duration predictor}
\label{appendix: duration predictor}

Although MaskGCT describes a flow-matching-based duration predictor in the paper, its released implementation uses a simple rule-based duration estimation:
\begin{equation}
    L(\text{target speech})
    =
    \frac{L(\text{prompt speech})}{L(\text{prompt phoneme})}
    L(\text{target phoneme}),
\end{equation}
where \(L(\cdot)\) denotes the sequence length. This rule assumes that the speech-to-phoneme length ratio is approximately consistent between the prompt and the target utterance.

We found that this heuristic works reasonably well in standard cases, but it can produce unreliable duration estimates when the prompt speaking rate is out of distribution. To improve robustness, we clip the prompt-derived ratio using the average ratio computed from the training set:
\begin{equation}
    r_{\mathrm{prompt}}
    =
    \frac{L(\text{prompt speech})}{L(\text{prompt phoneme})},
    \qquad
    r
    =
    \operatorname{clip}
    \left(
        r_{\mathrm{prompt}},
        \gamma \bar{r},
        \frac{\bar{r}}{\gamma}
    \right),
\end{equation}
\begin{equation}
    L(\text{target speech})
    =
    r L(\text{target phoneme}),
\end{equation}
where \(\bar{r}\) is the average speech-to-phoneme length ratio in the training set, and \(\gamma \in (0,1]\) controls the strength of the clipping. A smaller \(\gamma\) gives a wider valid range and therefore behaves closer to the original rule, while \(\gamma=1\) forces all utterances to use the global average ratio \(\bar{r}\).

In our training set, the average ratios \(\bar{r}\) are \(3.224\) and \(3.286\) for English and Chinese, respectively. We evaluate the effect of \(\gamma\) using GibbsTTS-Base on the validation set. To focus on challenging cases, we rank the utterances in the validation set by their prompt speech-to-phoneme ratio deviation and select the top \(10\%\) most out-of-distribution utterances. The results are shown in Table~\ref{tab:duration-clipping}.

\begin{table}[H]
  \caption{Effect of duration-ratio clipping on speaking-rate outliers. The utterances are derived from the validation set. The best value for each metric is highlighted in \textbf{bold}.}
  \label{tab:duration-clipping}
  \centering
  \vspace{4pt}
  \resizebox{0.5\textwidth}{!}{
  \begin{tabular}{l | ccc}
    \toprule
    \textbf{\(\gamma\)}
    & \textbf{UTMOS}\(\uparrow\) 
    & \textbf{WER (\%)}\(\downarrow\) 
    & \textbf{SIM}\(\uparrow\) \\
    \midrule
    No clipping & 3.992 & 5.015 & 0.686 \\
    0.2         & 3.992 & 5.015 & 0.686 \\
    0.4         & 3.992 & 5.015 & 0.686 \\
    0.6         & 3.998 & 4.523 & 0.688 \\
    0.8         & 4.017 & 4.130 & \textbf{0.698} \\
    1.0         & \textbf{4.048} & \textbf{3.835} & 0.686 \\
    \bottomrule
  \end{tabular}}
\end{table}

When \(\gamma \leq 0.4\), the results are identical to those without clipping, indicating that the clipping range is still too wide to affect the selected utterances. As \(\gamma\) increases, duration estimation becomes more constrained and the WER consistently improves. 
Although \(\gamma=1.0\) achieves the best UTMOS and WER in this subset, it makes the duration estimate entirely determined by the global average ratio \(\bar{r}\). This removes prompt-dependent speaking-rate variation and leads to an overly deterministic duration predictor, which may be undesirable for broader application scenarios. Therefore, we fix \(\gamma=0.8\) in all experiments as a practical trade-off between robustness and prompt-dependent duration modeling.

\clearpage
\section{Inference sampling temperature selection}
\label{appendix: temperature search}

During inference, the logits predicted by the model are divided by a sampling temperature before token sampling. We select this temperature on the validation set. In practice, we observe that the suitable temperature range differs between MI-DFM models and masked DFM or discrete diffusion models. Here, we use GibbsTTS-Large as an example to illustrate the temperature-search pipeline.

\begin{table}[H]
  \caption{Temperature search results for GibbsTTS-Large on the validation set. The best value for each metric is highlighted in \textbf{bold}.}
  \label{tab: temperature_search}
  \centering
  \vspace{4pt}
\resizebox{0.6\textwidth}{!}{
\begin{tabular}{c | ccc}
  \toprule
  \textbf{Temperature}
  & \textbf{UTMOS}$\uparrow$ 
  & \textbf{WER (\%) $\downarrow$} 
  & \textbf{SIM $\uparrow$} \\
  \midrule
0.2 & 4.039 & 3.215 & 0.749 \\
0.4 & 4.079 & \textbf{3.013} & 0.751 \\
0.6 & \textbf{4.084} & 3.033 & \textbf{0.752} \\
0.8 & 4.048 & 3.053 & 0.748 \\
1.0 & 3.860 & 3.250 & 0.737 \\
    \bottomrule
  \end{tabular}}
\end{table}

As shown in Table~\ref{tab: temperature_search}, temperature $0.6$ achieves the best UTMOS and SIM while maintaining a WER close to the best value. Following the evaluation protocol described in Section~\ref{sec: eval}, we use $0.6$ as the sampling temperature for GibbsTTS. For MI-DFM-based models, the selected temperature is consistently $0.6$ in our experiments. Masked generative baselines use lower temperatures: $0.1$ for masked DFM with the closed-form kinetic-optimal scheduler, and $0.2$ for the remaining models.

\clearpage
\section{Effect of number of function evaluations}
\label{appendix: nfe}

We further study the effect of the number of function evaluations (NFE) during inference. Besides the main 32 NFE setting, Tables~\ref{tab: nfe seedtts results} and~\ref{tab: nfe cosyvoice results} report additional results with 16 and 64 NFE.
For MI-DFM, the proposed moment corrector improves performance over the corresponding solver without the corrector in nearly all settings, particularly on UTMOS and WER/CER metrics.
The improvement is especially clear in intelligibility metrics at low and moderate NFE, where finite-step discretization errors are more significant. On UTMOS, the corrector provides consistent gains across all NFE settings.
With the numerical kinetic-optimal scheduler, MI-DFM with the corrector achieves the best UTMOS on both Seed-TTS and CosyVoice 3 test sets, showing that the proposed scheduler and corrector are effective for improving naturalness.

Compared with grid-searched heuristic schedulers, the numerical kinetic-optimal scheduler achieves competitive or better overall performance without downstream scheduler search. 
Although grid-searched schedulers can occasionally obtain slightly better intelligibility metrics, the numerical scheduler provides stronger naturalness and similarity in most settings while avoiding manual scheduler tuning.

Masked DFM also benefits from increasing NFE, but its performance depends strongly on the scheduler. 
In particular, the closed-form kinetic-optimal scheduler provides strong results for masked DFM, whereas MaskGCT-style schedules require larger NFE to approach competitive performance. 
Masked DD is more sensitive to the scheduler. Although the closed-form KO scheduler improves naturalness, it leads to much worse WER/CER, suggesting that high perceptual quality does not necessarily imply good intelligibility for this decoding formulation.

\begin{table}[h]
  \caption{Objective evaluation results on Seed-TTS test sets. The best value for each metric is highlighted in \textbf{bold}.}
  \centering
  \label{tab: nfe seedtts results}
  \vspace{4pt}
\resizebox{1.\textwidth}{!}{
\setlength{\tabcolsep}{1.0mm}{
\begin{tabular}{llc | ccc | ccc}
  \toprule
  \multirow{2}{*}{\textbf{Method}} 
  & \multirow{2}{*}{\textbf{Scheduler}} 
  & \multirow{2}{*}{\textbf{NFE}} 
  & \multicolumn{3}{c}{\textbf{test-en}} 
  & \multicolumn{3}{c}{\textbf{test-zh}} \\
  \cmidrule(r){4-6} \cmidrule(r){7-9}
  &&& \textbf{UTMOS}$\uparrow$ 
  & \textbf{WER (\%) $\downarrow$} 
  & \textbf{SIM $\uparrow$} 
  & \textbf{UTMOS}$\uparrow$ 
  & \textbf{CER (\%) $\downarrow$} 
  & \textbf{SIM $\uparrow$} \\
  \midrule
Ground truth & \textemdash & \textemdash
    & 3.527 & 2.020 & 0.734
    & 2.782 & 1.327 & 0.755 \\
Codec reconstructed & \textemdash & \textemdash
    & 3.407 & 2.229 & 0.695
    & 2.564 & 1.472 & 0.725 \\
\midrule
\multirow{3}{*}{MI-DFM (GibbsTTS)} & \multirow{3}{*}{Numerical KO}
& 16
    & 3.548 & 2.037 & 0.736
    & 2.594 & 1.569 & 0.785 \\
&& 32
    & 3.651 & 1.777 & \textbf{0.743}
    & 2.712 & 1.327 & 0.790 \\
&& 64
    & \textbf{3.704} & 1.785 & 0.742
    & \textbf{2.779} & 1.223 & 0.790 \\
\midrule
\multirow{3}{*}{MI-DFM w/o corrector} & \multirow{3}{*}{Numerical KO}
& 16
    & 3.288 & 2.363 & 0.715
    & 2.323 & 2.364 & 0.769 \\
&& 32
    & 3.403 & 2.120 & 0.723
    & 2.447 & 1.777 & 0.775 \\
&& 64
    & 3.450 & 2.037 & 0.727
    & 2.490 & 1.562 & 0.777 \\
\midrule
\multirow{3}{*}{MI-DFM} & \multirow{3}{*}{Grid-searched}
& 16
    & 3.474 & 1.869 & 0.721
    & 2.477 & 1.519 & 0.774 \\
&& 32
    & 3.617 & 1.793 & 0.729
    & 2.628 & 1.297 & 0.784 \\
&& 64
    & 3.683 & \textbf{1.760} & 0.730
    & 2.704 & 1.349 & 0.785 \\
\midrule
\multirow{3}{*}{MI-DFM w/o corrector} & \multirow{3}{*}{Grid-searched}
& 16
    & 3.245 & 2.112 & 0.706
    & 2.242 & 2.002 & 0.759 \\
&& 32
    & 3.380 & 2.070 & 0.711
    & 2.381 & 1.637 & 0.767 \\
&& 64
    & 3.421 & 1.928 & 0.717
    & 2.442 & 1.580 & 0.772 \\
\midrule
\multirow{3}{*}{Masked DFM} & \multirow{3}{*}{Closed-form KO}
& 16
    & 3.533 & 2.296 & 0.739
    & 2.519 & 2.650 & 0.782 \\
&& 32
    & 3.639 & 1.969 & 0.742
    & 2.656 & 1.536 & 0.788 \\
&& 64
    & 3.661 & 1.793 & \textbf{0.743}
    & 2.687 & 1.342 & \textbf{0.791} \\
\midrule
\multirow{3}{*}{Masked DFM} & \multirow{3}{*}{DiFlow-TTS}
& 16
    & 3.471 & 1.869 & 0.727
    & 2.453 & 1.507 & 0.781 \\
&& 32
    & 3.546 & 1.827 & 0.728 
    & 2.559 & 1.308 & 0.785 \\
&& 64
    & 3.588 & 1.810 & 0.728
    & 2.592 & \textbf{1.195} & 0.785 \\
\midrule
\multirow{3}{*}{Masked DFM} & \multirow{3}{*}{MaskGCT}
& 16
    & 2.800 & 5.272 & 0.673
    & 1.787 & 9.419 & 0.713 \\
&& 32
    & 3.269 & 2.724 & 0.712
    & 2.195 & 3.140 & 0.762 \\
&& 64
    & 3.449 & 2.405 & 0.724
    & 2.417 & 1.616 & 0.776 \\
\midrule
\multirow{3}{*}{Masked DD} & \multirow{3}{*}{Closed-form KO}
& 16
    & 3.469 & 9.127 & 0.720
    & 2.507 & 10.474 & 0.779 \\
&& 32
    & 3.634 & 5.808 & 0.731
    & 2.706 & 6.033 & 0.787 \\
&& 64
    & 3.682 & 4.802 & 0.732
    & 2.773 & 4.599 & 0.789 \\
\midrule
\multirow{3}{*}{Masked DD} & \multirow{3}{*}{DiFlow-TTS}
& 16
    & 2.191 & 28.671 & 0.587
    & 1.557 & 33.345 & 0.644 \\
&& 32
    & 2.768 & 9.303 & 0.672
    & 1.825 & 10.711 & 0.734 \\
&& 64
    & 3.052 & 4.861 & 0.697
    & 2.072 & 3.800 & 0.763 \\
\midrule
\multirow{3}{*}{Masked DD} & \multirow{3}{*}{MaskGCT}
& 16
    & 3.306 & 2.615 & 0.715
    & 2.273 & 1.999 & 0.774 \\
&& 32
    & 3.415 & 2.338 & 0.721
    & 2.387 & 1.583 & 0.776 \\
&& 64
    & 3.457 & 2.137 & 0.722
    & 2.442 & 1.502 & 0.778 \\
\bottomrule
  \end{tabular}}}
\end{table}

\begin{table}[h]
  \caption{Objective evaluation results on CosyVoice 3 test sets. The best value for each metric is highlighted in \textbf{bold}.}
  \centering
  \label{tab: nfe cosyvoice results}
  \vspace{4pt}
\resizebox{1.\textwidth}{!}{
\setlength{\tabcolsep}{1.0mm}{
\begin{tabular}{llc | ccc | ccc}
  \toprule
  \multirow{2}{*}{\textbf{Method}} 
  & \multirow{2}{*}{\textbf{Scheduler}} 
  & \multirow{2}{*}{\textbf{NFE}} 
  & \multicolumn{3}{c}{\textbf{en}} 
  & \multicolumn{3}{c}{\textbf{zh}} \\
  \cmidrule(r){4-6} \cmidrule(r){7-9}
  &&& \textbf{UTMOS}$\uparrow$ 
  & \textbf{WER (\%) $\downarrow$} 
  & \textbf{SIM $\uparrow$} 
  & \textbf{UTMOS}$\uparrow$ 
  & \textbf{CER (\%) $\downarrow$} 
  & \textbf{SIM $\uparrow$} \\
  \midrule
\multirow{3}{*}{MI-DFM (GibbsTTS)} & \multirow{3}{*}{Numerical KO}
& 16
    & 3.042 & 4.711 & 0.686
    & 2.304 & 5.546 & 0.781 \\
&& 32
    & 3.238 & 4.110 & 0.691
    & 2.438 & 4.144 & 0.780 \\
&& 64
    & \textbf{3.312} & 4.206 & 0.677
    & \textbf{2.486} & 4.069 & 0.768 \\
\midrule
\multirow{3}{*}{MI-DFM w/o corrector} & \multirow{3}{*}{Numerical KO}
& 16
    & 2.623 & 6.090 & 0.662
    & 1.961 & 7.345 & 0.765 \\
&& 32
    & 2.850 & 4.616 & 0.668
    & 2.135 & 5.485 & 0.772 \\
&& 64
    & 2.938 & 4.302 & 0.676
    & 2.173 & 5.047 & 0.777 \\
\midrule
\multirow{3}{*}{MI-DFM} & \multirow{3}{*}{Grid-searched}
& 16
    & 2.773 & 4.192 & 0.666
    & 2.036 & 4.165 & 0.767 \\
&& 32
    & 3.009 & 4.506 & 0.674
    & 2.189 & 3.706 & 0.772 \\
&& 64
    & 3.132 & 4.411 & 0.669
    & 2.279 & 3.768 & 0.765 \\
\midrule
\multirow{3}{*}{MI-DFM w/o corrector} & \multirow{3}{*}{Grid-searched}
& 16
    & 2.390 & 5.107 & 0.642
    & 1.806 & 5.266 & 0.744 \\
&& 32
    & 2.616 & 4.547 & 0.653
    & 1.939 & 4.274 & 0.755 \\
&& 64
    & 2.690 & 4.916 & 0.659
    & 1.982 & 4.007 & 0.759 \\
\midrule
\multirow{3}{*}{Masked DFM} & \multirow{3}{*}{Closed-form KO}
& 16
    & 2.819 & 6.923 & 0.681
    & 2.120 & 8.377 & 0.769 \\
&& 32
    & 3.049 & 5.162 & \textbf{0.695}
    & 2.294 & 4.855 & 0.781 \\
&& 64
    & 3.098 & 4.342 & 0.692
    & 2.332 & 4.247 & \textbf{0.782} \\
\midrule
\multirow{3}{*}{Masked DFM} & \multirow{3}{*}{DiFlow-TTS}
& 16
    & 2.783 & 4.302 & 0.665
    & 2.014 & 4.117 & 0.770 \\
&& 32
    & 2.925 & 4.288 & 0.673
    & 2.141 & 3.727 & 0.777 \\
&& 64
    & 2.970 & \textbf{3.960} & 0.670
    & 2.182 & \textbf{3.494} & 0.776 \\
\midrule
\multirow{3}{*}{Masked DFM} & \multirow{3}{*}{MaskGCT}
& 16
    & 1.911 & 21.330 & 0.533
    & 1.523 & 16.973 & 0.611 \\
&& 32
    & 2.354 & 8.767 & 0.614
    & 1.789 & 7.235 & 0.698 \\
&& 64
    & 2.658 & 5.216 & 0.653
    & 1.978 & 5.355 & 0.753 \\
\midrule
\multirow{3}{*}{Masked DD} & \multirow{3}{*}{Closed-form KO}
& 16
    & 2.730 & 26.751 & 0.662
    & 2.192 & 21.432 & 0.763 \\
&& 32
    & 3.042 & 18.353 & 0.677
    & 2.401 & 14.156 & 0.776 \\
&& 64
    & 3.144 & 14.939 & 0.681
    & 2.476 & 11.735 & 0.781 \\
\midrule
\multirow{3}{*}{Masked DD} & \multirow{3}{*}{DiFlow-TTS}
& 16
    & 1.535 & 100.451 & 0.432
    & 1.369 & 57.556 & 0.523 \\
&& 32
    & 1.885 & 36.133 & 0.562
    & 1.494 & 29.180 & 0.673 \\ 
&& 64
    & 2.201 & 13.328 & 0.624
    & 1.622 & 11.742 & 0.735 \\
\midrule
\multirow{3}{*}{Masked DD} & \multirow{3}{*}{MaskGCT}
& 16
    & 2.466 & 7.702 & 0.649
    & 1.812 & 6.052 & 0.755 \\
&& 32
    & 2.657 & 6.719 & 0.655
    & 1.903 & 4.575 & 0.762 \\
&& 64
    & 2.725 & 5.189 & 0.657
    & 1.964 & 4.623 & 0.766 \\
\bottomrule
  \end{tabular}}}
\end{table}



\end{document}